\documentclass[journal]{IEEEtran}

\usepackage{amssymb}
\usepackage{amsmath}
\DeclareMathOperator*{\argmin}{arg\,min}
\usepackage{enumerate}
\usepackage{graphicx}
\usepackage{algorithm}
\usepackage{algorithmic}
\usepackage{xcolor}
\usepackage{makecell}
\usepackage{multirow}
\usepackage{hyperref}
\usepackage{amsthm}
\usepackage[english]{babel}
\usepackage[absolute,overlay]{textpos}
\usepackage{stfloats}
\usepackage{subfig}
\usepackage{pifont}
\newcommand{\cmark}{\ding{51}}
\newcommand{\xmark}{\ding{55}}
\usepackage[linesnumbered,ruled,vlined,algo2e]{algorithm2e}
\usepackage{booktabs} 
\makeatletter
\def\@makecaption#1#2{%
	\footnotesize
	\vskip\abovecaptionskip
	\sbox\@tempboxa{#1. #2}%
	\ifdim \wd\@tempboxa >\hsize
	#1. #2\par
	\else
	\global \@minipagefalse
	\hb@xt@\hsize{\hfil\box\@tempboxa\hfil}%
	\fi
	\vskip\belowcaptionskip}
\makeatother
\newtheorem{theorem}{\textbf{Theorem}}
\newtheorem{lemma}{\textbf{Lemma}}
\newtheorem{remark}{\textbf{Remark}}

\newtheorem{problem}{\textbf{Problem}}
\newtheorem{corollary}{\textbf{Corollary}}
\newtheorem{example}{\textbf{Example}}

\usepackage{cite}
\ifCLASSINFOpdf
  
\else

\fi

\begin{document}

\title{A Dual-AoI-based Approach for Optimal Transmission Scheduling in Wireless Monitoring Systems with Random Data Arrivals}

\author{Yuchong Zhang,~\IEEEmembership{Graduate Student Member,~IEEE,}
	Yi Cao, Xianghui Cao,~\IEEEmembership{Senior Member,~IEEE}
	\thanks{Y. Zhang and X. Cao are with the School of Automation, Southeast University, and Key Laboratory of Measurement and Control of Complex Systems of Engineering, Ministry of Education, Nanjing 210096, China (e-mail: yc\_zhang@seu.edu.cn, xhcao@seu.edu.cn).}
	\thanks{Y. Cao is with the China United Network Communications Co., Ltd., Guangzhou Branch, Guangzhou 510620, China (e-mail: caoyi12@chinaunicom.cn).}
}

\markboth{Journal of \LaTeX\ Class Files}%
{Zhang \MakeLowercase{\textit{et al.}}: A Dual-AoI-based Approach for Optimal Transmission Scheduling in Wireless Monitoring Systems with Random Data Arrivals}

\maketitle

\begin{abstract}
	In Internet of Things (IoTs), the freshness of system status information is crucial for real-time monitoring and decision-making. This paper studies the transmission scheduling problem in wireless monitoring systems, where information freshness---typically quantified by the Age of Information (AoI)---is heavily constrained by limited channel resources and influenced by factors such as the randomness of data arrivals and unreliable wireless channel. Such randomness leads to asynchronous AoI evolution at local sensors and the monitoring center, rendering conventional scheduling policies that rely solely on the monitoring center’s AoI inefficient. To this end, we propose a dual-AoI model that captures asynchronous AoI dynamics and formulate the problem as minimizing a long-term time-average AoI function. We develop a scheduling policy based on Markov decision process (MDP) to solve the problem, and analyze the existence and monotonicity of a deterministic stationary optimal policy. Moreover, we derive a low-complexity scheduling policy which exhibits a channel-state-dependent threshold structure. In addition, we establish a necessary and sufficient condition for the stability of the AoI objective. Simulation results demonstrate that the proposed policy outperforms existing approaches.
\end{abstract}

\begin{IEEEkeywords}
Wireless monitoring, data freshness, dual-AoI, Markov decision process, random data arrivals.
\end{IEEEkeywords}

\IEEEpeerreviewmaketitle

\section{Introduction}\label{sec:introduction}
\IEEEPARstart{I}{n} recent years, the Internet of Things (IoTs) has developed rapidly and has been widely applied in various domains, including smart manufacturing, intelligent transportation and environmental monitoring \cite{11183623, 9445676, 9817083}. In IoT applications, the status information is fundamental, which necessitates real-time monitoring of the underlying physical processes. Utilizing outdated status may lead to incorrect decisions or even catastrophic consequences, thus it is crucial to maintain timely status awareness.

As a network performance metric for quantifying information freshness, the age of information (AoI) has been proposed and attracted increasing attention, which is defined as the time elapsed since the most recently received update packet was generated \cite{sun2022age}. It is worth noting that the loss of information value in IoT systems typically exhibits a nonlinear trend instead of the linear characteristic of AoI. For instance, in remote control systems, the control errors or safety risks induced by the increase of AoI may grow exponentially \cite{9910575}. This motivates the use of an AoI function to map the time metric to information urgency, enabling a more accurate evolution of the penalty associated with AoI updates in the decision-making process. On the other hand, the common ``generate-at-will’’ model assumes that the sensor always possesses fresh data ready for transmission \cite{kadota_scheduling_2018}, but data arrivals are sometimes random, where fresh data packets may be unavailable at the time of scheduling. Specifically, such random data arrival is prevalent in practical network sensing applications and there exist the following three typical scenarios: 1) \textit{Random observation}, in vehicle networks, some sensors generate update packets based on specific events, i.e., vehicle motion causes changes in distance or speed \cite{xie2024scheduling}; 2) \textit{Random energy availability}, in energy-harvesting environmental monitoring, data generated by sensors deployed in fields may be available in a random manner due to the random nature of energy harvesting processes \cite{11079931}; 3) \textit{Random access}, in random access networks, contention-based data transmission may result in random data arrivals \cite{9174254}. Therefore, the sensor-side uncertainty induced by random data arrivals is a non-negligible issue in scheduling policy design.

The randomness in data arrivals may cause conventional deterministic scheduling methods to overlook the fact that no fresh data is available at the sensor-side during scheduling, thereby leading to inefficient decision-making \cite{kadota_minimizing_2021}. As a result, scheduling decisions should consider not only the AoI at the monitoring center (MC) but also the sensor-side data arrival state. In addition, a high AoI at the MC does not necessarily mean that a new update from sensor is worthwhile, since the data packet stored in the buffer may still be stale, which forces the scheduler to make decisions between transmitting an older packet currently or foregoing the scheduling opportunity. Accordingly, both sensor-side data availability and buffered data freshness need to be jointly accounted for.

Furthermore, the randomness of wireless channels introduces a new challenge to this scheduling problem. When the channel quality is poor, a scheduled sensor may not be able to deliver its update packets successfully. Moreover, in practical network transmissions, channel quality may vary dynamically, possessing temporal correlation, and cannot be accurately characterized by the commonly adopted independent and identically distributed (i.i.d.) channel models \cite{yao2022age}. In this case, data arrivals may occur when channel condition is poor, while good transmission windows may appear when no fresh data is available, requiring the scheduler to trade off the transmission of old data packet against possible channel quality reduction upon new arrivals. Due to the randomness of wireless channels and data arrivals, the AoI at the local sensor (AoLI) and that at the receiver in MC (AoRI) evolve in a non-synchronized manner. Hence, it is essential to design a dual-AoI-aware scheduling strategy that jointly captures the effects of data arrival and wireless transmission on monitoring performance.

Motivated by the aforementioned challenges, this paper investigates a transmission scheduling problem for wireless monitoring networks over a shared memory-based channel. To capture information freshness subject to random data arrivals, we consider the asychronous AoLI and AoRI in the transmission process of data packets from multiple local sensors to the MC, based on which we construct an AoI function of monitoring performance and formulate the optimization problem. We introduce a Markov decision process (MDP) to analyze the structure of the optimal scheduling policy and, leveraging this structural insight, design a low-complexity near-optimal scheduling framework. The main contributions of this work are summarized as follows:
\begin{enumerate}
	\item We propose a dual-AoI evolution model, and formulate an optimization scheduling  problem that minimizes the long-term time-average sum AoI functions.
	\item We establish a MDP-based approximate structure of optimal policy through value function decomposition, and reveal that it exhibits a channel state-dependent threshold structure. Then, a structure-informed scheduling algorithm is presented.
	\item We derive a necessary and sufficient condition related to data arrival rate to guarantee monitoring stability over a shared memory-based channel.
\end{enumerate}

The remainder of this paper is organized as follows. In Section 2, related works are reviewed. Section 3 proposes the system model and builds the optimization problem. Section 4 constructs the MDP framework and designs the policy, as well as some theoretical results. The simulation analysis is given in Section 5. Section 6 makes the conclusion.

\textit{Notations:} $\mathbb{N}$, $\mathbb{R}$ and $\mathbb{Z}^+$ denote the sets of all non-negative integers, real numbers and positive integers, respectively. $\mathbb{R}^n$ is the $n$-dimensional Euclidean space. $\Pr(\cdot)$ and $\mathbb{E} [\cdot]$ stand for the probability and expectation of stochastic events, respectively. The transposition, trace and spectral radius of matrix $A$ are denoted by $(A)^T$, $\text{Tr}(A)$ and $\rho(A)$, respectively. $\text{diag}\{ \cdot \}$ denotes the diagonal matrix. $\Vert \alpha \Vert$ is the 1-norm of vector $\alpha$. Inequalities between two vectors are interpreted elementwise, expressed as $\succeq$ or $\preceq$. 

\section{Related Works}
The freshness of status updates delivered to the receiver is paramount for ensuring precise monitoring under the wireless network environment. Consequently, substantial research efforts have been dedicated to quantifying this freshness and optimizing transmission scheduling accordingly. In this section, we first discuss the fundamental developments in AoI minimization and its embedded information urgency, and subsequently investigate the impact of random data arrival models and uncertain wireless channel characteristics on scheduling design.

To characterize the timeliness of system states, AoI has been proposed from the receiver’s perspective as a network-layer metric \cite{sun2022age}. In some early studies, such as \cite{10663703} and \cite{yates2018age}, the time-average AoI of queue systems was analyzed to guide the packet updates. To fully exploit the role of AoI in evaluating the performance of status update systems, the design of AoI-based scheduling has been extensively studied. In \cite{abd2020aoi, de2023unified}, the transmission scheduling mechanisms aimed at AoI optimization were investigated in single-source system scenarios. With hope to extended the AoI-aware scheduling research to multi-source systems, \cite{kadota_scheduling_2018} proposed the near-optimal Max-Weight and Whittle’s index policies. Considering the practical constrains in network environments, \cite{zhu_aoi_2021} introduced AoI minimization to design energy-efficient data collection scheduling policy. In \cite{10269649} and \cite{wang2023minimize}, learning-based methods have also been applied to address AoI-optimizing scheduling problems under environmental uncertainty. Similar method was also applied in \cite{11113352} to design the trajectories of uncrewed aerial vehicles jointly combined with AoI optimization. However, AoI alone is insufficient to capture the information urgency, especially the dynamics of underlying devices \cite{10955408}. In terms of this issue, a metric of AoI reduction was proposed in \cite{pan2022optimizing} to address the multi-source information update problem. In \cite{10778564}, the information urgency was modeled as a function of distortion measure and relative value iteration algorithm was applied to schedule the sampling and transmission. Hence, \cite{sun2017update, hu2021alpha, cao2023optimal} constructed nonlinear AoI penalties to guide the real-time motoring optimization. Similarly, a lightweight multi-sensor scheduler was proposed in \cite{chang_lightweight_2024} based on AoI function-based estimation system model. Unlike the linear evolution of AoI, the dynamics of AoI-based functions are more complex, making traditional AoI optimization methods difficult to adopt directly.

It is worth noting that the data generation model has a significant impact on scheduling design. The aforementioned studies mainly adopt the “generate-at-will’’ model, in which a fresh status update is generated immediately whenever a transmission to the receiver is assigned. As a result, scheduling decisions only need to consider the instantaneous AoI value. This assumption becomes invalid when update packets arrive randomly, making many existing methods inapplicable, thus several recent studies have begun to address this challenge. \cite{hsu_scheduling_2020} developed two scheduling methods for random information arrivals with no buffer based respectively on MDP framework and Whittle’s index. \cite{tang2020scheduling} further incorporated bandwidth constraints for this problem. In \cite{kadota_minimizing_2021}, queueing disciplines were considered to evaluate the impact of random arrivals on AoI. Moreover, \cite{liu_optimizing_2023} modeled the partial information uncertainty caused by random arrivals as a partially observation problem. While these works effectively address source-side uncertainty, they overlook the physical layer characteristics of wireless links, typically assuming i.i.d. channels \cite{pan2022optimizing}. In environments involving networks made up of multi-devices, wireless channels often exhibit temporal correlation or memory \cite{yao2022age}. In \cite{wang2021remote}, the Gilbert-Elliott (GE) model was used to describe the packet loss occurring in transmissions. As reported in \cite{wei_double_2023}, this memory was modeled as a Markov chain and studied the transmission stability problem for single system scenarios. Thus, it remains an open problem that how to account for the effects of random data arrivals and ensure both the timeliness and stability of monitoring coupled with memory-based channels.

The summary of related works are listed in Table \ref{tab1}. In contrast to our work, most existing studies focus primarily on minimizing the AoI value itself, without exploiting the information urgency expression of AoI-based performance functions. Although MDP formulations are widely used, the structural design of scheduling policies remains underexplored. Furthermore, by jointly accounting for stochastic arrivals and dynamic channels, our framework introduces new challenges for policy stability analysis. 

\begin{table*}[htbp]
	\centering
	\caption{Summary and Comparison of Related Works}
	\resizebox{\textwidth}{!}{
		\begin{tabular}{lccccc}
			\toprule
			\multirow{3}{*}{\textbf{References}} & 
			\multirow{3}{*}{\textbf{\makecell{Optimization\\Objective}}} & 
			\multirow{3}{*}{\textbf{Main Methods}} & 
		    \multicolumn{3}{c}{\textbf{Items of Interest}} \\ \cmidrule(lr){4-6}
			 &  &    & \textbf{\makecell{Data Arrival \\ Randomness}} & \textbf{\makecell{Channel Dynamics}} & \textbf{\makecell{Policy Stability}} \\
			\midrule
			\cite{abd2020aoi}             & Time-average AoI & MDP, relative value iteration  & \xmark & \xmark & \xmark \\
			\cite{de2023unified}          & Time-average AoI and energy & MDP, Q-learning  & \xmark & \xmark & \xmark \\
			\cite{wang2023minimize}       & Weighted sum average AoI & Constrained MDP, threshold-based policy  & \xmark & \xmark & \xmark\\
			\cite{11113352}               & Peak AoI & MDP, deep reinforcement learning  & \xmark & \xmark & \xmark\\
			\cite{pan2022optimizing}      & Time-average AoI penalty & Greedy algorithm   & \xmark & \xmark & \xmark \\
			\cite{10778564}               & Time-average AoI penalty & MDP, deep reinforcement learning  & \cmark & \xmark & \xmark \\
			\cite{hu2021alpha}            & Average AoI penalty & Theoretical method   & \xmark & \xmark & \xmark\\
			\cite{cao2023optimal}         & Time-average AoI penalty and energy & MDP, theoretical method   & \xmark & \xmark & \cmark \\
			\cite{chang_lightweight_2024} & Time-average AoI function & MDP, Whittle-index  & \xmark & \xmark & \cmark\\
			\cite{hsu_scheduling_2020}    & Time-average AoI & MDP, relative value iteration   & \cmark & \xmark & \xmark\\			
			\cite{tang2020scheduling}     & Time-average AoI & MDP, relative value iteration   & \cmark  & \cmark & \xmark\\
			\cite{liu_optimizing_2023}    & Time-average AoI & MDP, Max-Weight policy  & \cmark & \xmark & \xmark \\
			\cite{wang2021remote}         & Time-average error and energy & MDP, relative value iteration & \xmark & \cmark & \cmark \\
			\cite{wei_double_2023}        & Time-average error and energy & MDP, relative value iteration  & \xmark & \cmark & \cmark \\
			\cite{yin2020application}     & Time-average AoI-realted value & MDP, deep reinforcement learning  & \xmark & \xmark & \xmark\\
			\cite{11007613}               & AoI penalty & Constrained MDP, relative value iteration  & \cmark & \xmark & \xmark \\
			\cite{wang2024scheduling}     & Time-average AoI & Constrained MDP, Lyapunov optimization  & \cmark & \xmark & \xmark\\
			\textbf{This Work}            & Time-average dual-AoI function & \makecell{MDP, structure analysis, low-complexity theoretical method}  & \cmark & \cmark &\cmark\\
			\bottomrule
		\end{tabular}%
	}
	\label{tab1}
\end{table*}

\section{System Model and Problem Formulation}
As shown in Fig. \ref{fig1}, we consider a typical wireless monitoring networks, where the RMC carries the real-time monitoring through acquiring data transmitted from $N$ local sensors. Each sensor performs the local observation of an  independent physical system, generates observation packets randomly, and transmits them to the RMC over a shared wireless channel. Considering the randomness of data transmission, a buffer is configured to store the freshest data packet, which means that the older data packet would be replaced upon the arrival of a newer packet. Set $i \in \mathcal{N} = \{1,2, \dots, N\}$ as the index of local sensor, and the index of time slot is defined as $k \in \{1,2, \dots, T\}$.

\begin{figure}[ht]
	\renewcommand{\figurename}{Fig.}
	\centering
	\includegraphics[width=0.3\textwidth]{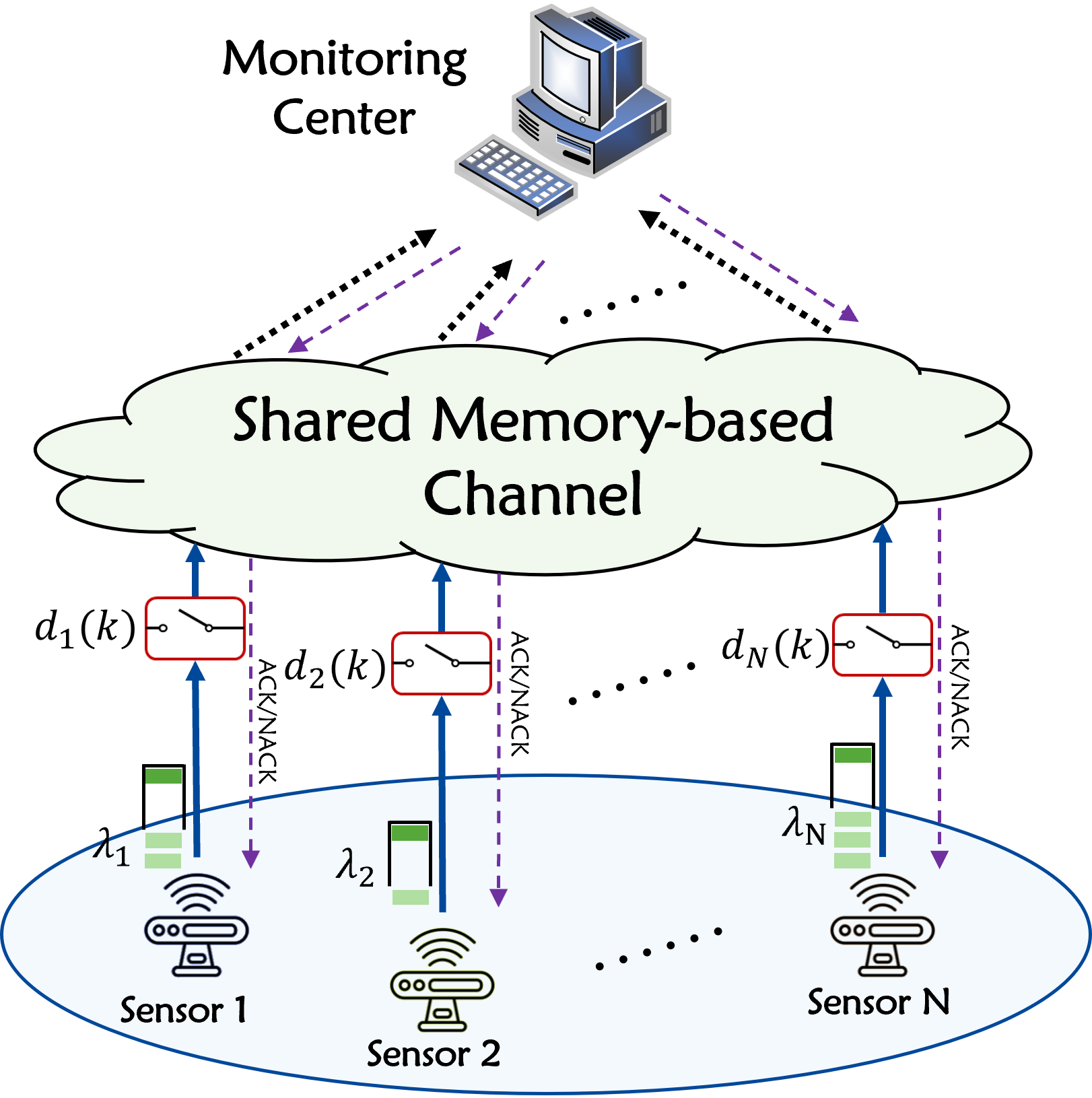}
	\caption{The structure of wireless monitoring.}
	\label{fig1}
\end{figure}

\begin{figure}[ht]
	\renewcommand{\figurename}{Fig.}
	\centering
	\includegraphics[width=0.4\textwidth]{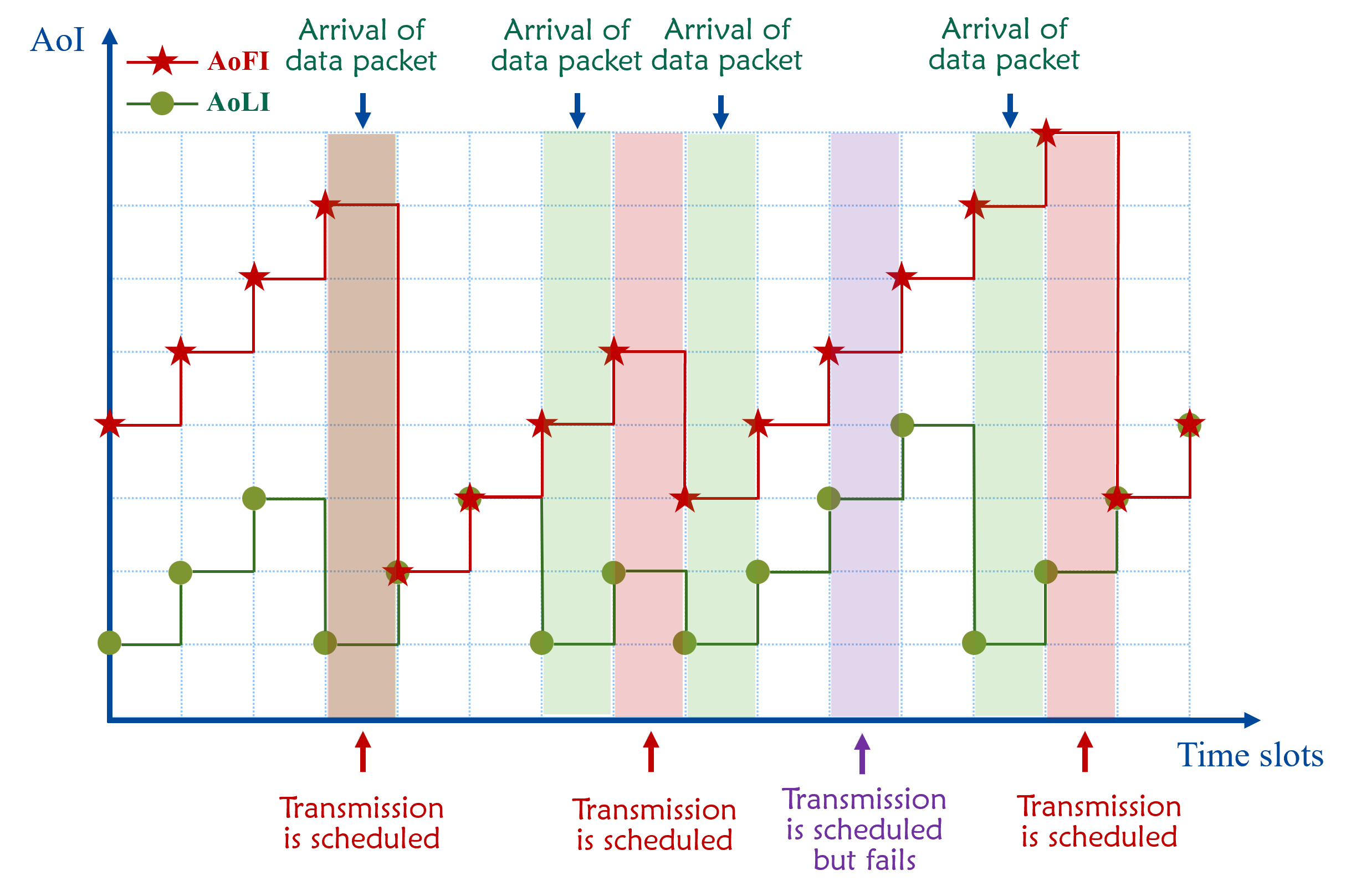}
	\caption{Illustration of AoLI and AoRI.}
	\label{fig2}
\end{figure}

\subsection{Memory-based Channel Model}
In the wireless network environment, obstructions, reflections, or network conditions may cause dynamic variations or fading of transmission channel quality. The static channel models, i.e.,independent Bernoulli distribution, commonly used in existing studies, fail to explain such phenomenon. In this case, the change of channel quality is not completely random but exhibits temporal correlations. To capture this channel memory characteristic, we introduce the GE model, which is widely applied to describe the packet loss in practical wireless networks and confirmed to represent Rayleigh-fading channels accurately \cite{wang2021remote, ayan2023optimal}. The channel fluctuates between "Good'' and "Bad'' states, naturally viewed as a two-state Markov chain, where two states are denoted by $\vartheta(k) = 1$ and $\vartheta(k) = 0$ at time slot $k$. The transition probabilities of channel states are defined as follows:
\begin{align} \label{eq1}
	& \Pr \{\vartheta(k) = 0 | \vartheta(k-1) = 0\} = \kappa_{00}, \nonumber \\
	& \Pr \{\vartheta(k) = 1 | \vartheta(k-1) = 0\} = \kappa_{01}, \nonumber \\
	& \Pr \{\vartheta(k) = 1 | \vartheta(k-1) = 1\} = \kappa_{11}, \nonumber \\
	& \Pr \{\vartheta(k) = 0 | \vartheta(k-1) = 1\} = \kappa_{10},
\end{align}
where $\kappa_{00}, \kappa_{01}, \kappa_{11}, \kappa_{10} \in (0,1)$ are given transition probabilities. Besides, the probabilities of switching different states are computed by $\kappa_{01} = 1 - \kappa_{00}$ and $\kappa_{10} = 1 - \kappa^\vartheta_{11}$. Define $\theta_i(k)$ as an indicator of data reception at time slot $k$ when sensor $i$ is chosen to transmit data, i.e., $\theta_i(k) = 1$ means that data is received by MC, otherwise, $\theta_i(k) = 0$. Then, the successful transmission probability of sensor $i$ is denoted as
\begin{align}
	p^\vartheta_i(k) = \Pr \{ \theta_i(k) = 1 | \vartheta(k), d_i(k) = 1\},
\end{align}
where $d_i(k)$ represents the scheduling decision for sensor $i$. Specifically, when sensor $i$ is scheduled, $d_i(k) = 1$; otherwise, $d_i(k) = 0$. The packet loss probability of the $i$-th sensor is expressed as $1 - p^\vartheta_i(k)$. Moreover, we assume that the ACK/NACK signals are fedback to sensors from the MC with neglecting the delay to inform whether the transmission is successful or not.

Since the bandwidth limitation of the shared channel, it is impossible to transmit data from all sensors in each time slot. Therefore, the following constraint is proposed for the scheduling problem:
\begin{equation}\label{3}
	\sum\limits_{i = 1}^N {{d_i}(k)}  \leq M, \forall k \in \mathbb{N},
\end{equation}
which means that at most $M$ out of $N$ sensors can be selected.

\subsection{AoI Function Regarding Monitoring Performance}
The existing researches mainly focus on the case where updated data packets are directly sent to destination upon generation, which is only appropriate for the sensors with stable transmission. In this paper, we consider a more general scenario of wireless monitoring networks that data packets are randomly available for transmissions. Considering the asynchronous variation in information freshness caused by the random update of data packets, it is hard to conduct the scheduling decisions based solely on the directly observed AoI, because the potential benefit of scheduling cannot be inferred. Hence, we divide the AoI evolution into two stages for analysis: AoLI and AoRI. Here we use a buffer to store the most recently available data packet. As reported in \cite{kadota_minimizing_2021}, the queue mechanism with only single packet can achieve the best freshness under random arrivals. 

Let $\Delta_i^L(k)$ denotes the state of AoLI, measuring the elapsed time after the last update packet of local data arriving at the buffer, and $\gamma^L_i(k)$ as an indicator to show whether updated data packet arrives at buffer for sensor $i$. If there is no update arrival, $\gamma^L_i(k) = 0$, $\Delta_i^L(k)$ will increase by one at the beginning of next time slot, otherwise the old packet is replaced by a fresh one and $\Delta_i^L(k)$ is reset to zero with $\gamma^L_i(k) = 1$. The network-induced events follow a stochastic process, such that the arrival rate is supposed as i.i.d. for each time slot and various sensors, which is modeled as a Bernoulli distribution with probability $\lambda_i$, i.e., $\Pr \{\gamma_i^L = 1\} = \lambda_i, i \in \mathcal{N}$. The AoLI is modeled as follows:
\begin{align}  \label{12}
	{\Delta _i^L}(k) = 
	\begin{cases}
		0, & \text{if} \ \ \gamma^L_i(k) = 1, \\
		{\Delta _i^L}(k-1)+1, & \text{if} \ \ \gamma^L_i(k) = 0.
	\end{cases}
\end{align}

When local data from sensor $i$ is chosen to transmit and succeed, the current data packet of the $i$-th buffer reaches the MC, so that AoRI is updated to $\Delta _i^L(k-1) + 1$. If not, AoRI would increase by one. In view of scheduling decisions and channel qualities, we introduce an indicator $\gamma^R_i(k)$ to represent the real-time situation of data reception for sensor $i$, denoted as
\begin{align} \label{6}
	{\gamma^R_i}(k) = {d_i}(k){\theta _i}(k).
\end{align}
Then, the AoRI dynamics is described as follows:
\begin{align}  \label{13}
	{\Delta _i^R}(k) =
	\begin{cases}
		{\Delta _i^L}(k-1) + 1, & \text{if} \ \ \gamma^R_i(k) = 1,\\
		{\Delta _i^R}(k-1) + 1, & \text{if} \ \ \gamma^R_i(k) = 0.
	\end{cases}
\end{align}
where $\Delta_i^R(k)$ measures the time elapsed since the update packet was generated. The evolution processes of AoLI and AoRI are plotted in Fig. \ref{fig2}. 

In order to reveal the influence of information freshness on monitoring performance, we introduce the AoI function $f(\cdot)$ to measure the information urgency. Depending on system characteristics, various AoI functions have been established in the existing literature. \cite{hu2021alpha} proposed a monotone non-decreasing function in the exponential form to depict the performance of universal nonlinear system. In addition, similar function was developed in \cite{cao2023optimal} for remote control. Then, two typical examples are given as below:
\begin{example} \label{exam2}
	In many real-time network applications, such as multi-user status monitoring in V2X interconnection, online learning relies heavily on fresh data \cite{sun2017update}. To reflect the data urgency, a commonly used AoI-penalty function is defined as
	\begin{equation} \label{3.9}
		f(\Delta_i^R(k)) = e^{r\Delta_i^R(k)} - 1,
	\end{equation}
	where $r > 0$ is a given constant.
\end{example}

\begin{example} \label{exam1}
	In typical remote state estimation systems, sensor networks can perform the local estimation using Kalman filtering. Consider a discrete-time linear time-invariant model as the monitoring object, then the estimation performance is mapped to the following function \cite{chang_lightweight_2024}:
	\begin{align} \label{3.8}
		& f(\Delta _i^R(k))  =  \text{Tr}({h^{{\Delta_i^R}(k)}}({{\bar P}_i})), 
	\end{align} 
	where the function $h(\bar P_i) \triangleq {A_i} \bar P_i A_i^T + \Sigma _i^\omega $, $h^n( \cdot ) = h(h^{n - 1}( \cdot ))$ and $\bar P_i$ is a steady-state after long run. $A_i$ represents the system matrix and $\Sigma _i^\omega$ is the covariance matrix of system process noise. Specially, \eqref{3.8} is easy to expanded as $\text{Tr} \left( A_i^{\Delta_i^R(k)} {{\bar P}_i} (A_i^T)^{\Delta_i^R(k)} + \sum \nolimits_{l=0}^{\Delta_i^R(k)-1} A_i^l  \Sigma_i^\omega (A_i^T)^l\right)$.
\end{example}

\subsection{Problem Formulation}
In order to find a scheduling policy that can improve monitoring performance defined by AoI function, we define the cost function of the $i$-th sensor as $f(\Delta _i^R(k))$. The mapping relationship from the state to the action is defined as $\pi \in \Pi$. Then, the optimization problem over infinite time horizon is formulated as follows:
\begin{problem} \label{prob1}
	\begin{align} \label{10}
		& \min \limits_{\pi \in \Pi} \ \ \limsup \limits_{T \to \infty} \frac{1}{T} \sum \limits_{k=1}^T \sum \limits_{i=1}^N \mathbb{E} [f(\Delta _i^R(k))], \\
		& \ \  \text{s.t. \ \ (3) holds.}
	\end{align}
\end{problem}

\section{The Proposed Scheduling Policy}
In this section, we first formulate the MDP based on the optimization Problem \ref{prob1} in Section III. Then, we investigate the existence and structure of the optimal scheduling policy and, on this basis, develop a structure-informed scheduling policy with low complexity. Finally, the condition for ensuring the stability of monitoring performance is analyzed.
\subsection{MDP Formulation}
We assume that state evolution of AoLI and AoRI are both observable to the MC before each scheduling decision. According to Problem \ref{prob1}, we consider the system state consisting of AoLI $\Delta^L_i(k)$, AoRI $\Delta^R_i(k)$ and channel state $\vartheta(k)$, which varies only relying on the current state and decision instead of historical states. Therefore, the scheduling problem can be regarded as a MDP characterized by a quadruple $(\mathcal{S}, \mathcal{A}, \Pr(\cdot|\cdot, \cdot), c(\cdot, \cdot))$ and details are as follows:

\begin{itemize}
	\item The state space $\mathcal{S}$ includes all possible values of AoLI $\Delta^L_i(k)$, AoRI $\Delta^R_i(k)$ for all sensors and channel state $\vartheta(k)$, denoted as $s(k) = (s^\Delta(k), \vartheta(k))$, where $s^\Delta(k) = (s_1^\Delta(k), s_2^\Delta(k), \dots, s_N^\Delta(k))$ and $s_i^\Delta(k) = (\Delta^L_i(k), \Delta^R_i(k)), i \in \mathcal{N}$.
\end{itemize}

\begin{itemize}
	\item The action space $\mathcal{A}$ consists of all feasible scheduling decisions, which is expressed as $d(k) = ({d_1}(k),{d_2}(k), \ldots ,{d_N}(k)) \in {\left\{ {0,1} \right\}^N}$, $d(k) \in\mathcal{A} $.
\end{itemize}

\begin{itemize}
	\item The transition function is related to the probabilities of random arrivals and successful transmissions. According to the action taken in the current time slot, the state transits to a new state in the next time slot with the following transition probability:
	\begin{align} \label{eq11}
		& \Pr \{ {s(k)\left| {s(k-1),d(k)} \right.} \} \nonumber \\
		& = \Pr \{ \vartheta(k) | \vartheta(k-1) \} \nonumber \\ 
		& \ \ \ \times \Pr \{ {s^\Delta(k) \left| {s^\Delta(k-1), \vartheta(k-1), d(k)} \right.} \} \nonumber \\
		& = \Pr \{ \vartheta(k) | \vartheta(k-1) \}  \nonumber \\
		& \ \ \ \times \prod \limits_{i=1}^N \Pr \{ {s_i^\Delta(k) \left| {s_i^\Delta(k-1), \vartheta(k-1), d_i(k)} \right.} \},
	\end{align}
	where the first term only follows the Markov chain defined in \eqref{eq1}, and it is denoted as 
	\begin{align} \label{eq12}
		& \Pr \{ \vartheta(k) | \vartheta(k-1) \} \nonumber \\
		& = \begin{cases}
			\kappa_{00},     & {\text{if}} \ \ \vartheta(k) = 0, \vartheta(k-1) = 0, \\
			1 - \kappa_{00}, & {\text{if}} \ \ \vartheta(k) = 0, \vartheta(k-1) = 1, \\
			\kappa_{11},     & {\text{if}} \ \ \vartheta(k) = 1, \vartheta(k-1) = 1, \\
			1 - \kappa_{11}, & {\text{if}} \ \ \vartheta(k) = 1, \vartheta(k-1) = 0, \\
			0,               & {\text{otherwise}}.
		\end{cases}
	\end{align}
	
	Then, the second term in the right-hand side (RHS) of \eqref{eq11} is given by
	\begin{align} \label{16}
		& \Pr \{ {s_i^\Delta(k) \left| {s_i^\Delta(k-1), \vartheta(k-1), d_i(k)} \right.} \} \nonumber \\
		& =\begin{cases}
			\lambda_i p_i^\vartheta (k),            & {\text{if}} \ \ \text{case (1)}, \\
			(1 - \lambda_i) p_i^\vartheta (k),      & {\text{if}} \ \ \text{case (2)}, \\
			\lambda_i (1 - p_i^\vartheta(k)),       & {\text{if}} \ \ \text{case (3)}, \\
			(1 - \lambda_i) (1 - p_i^\vartheta(k)), & {\text{if}} \ \ \text{case (4)}, \\
			\lambda_i,                              & {\text{if}} \ \ \text{case (5)}, \\
			1 - \lambda_i,                          & {\text{if}} \ \ \text{case (6)}, \\
			0, & {\text{otherwise}},
		\end{cases}
	\end{align}
	where $p_i^\vartheta (k)$ represents the successful transmission probability under channel state $\vartheta (k-1)$ for sensor $i$. The above six possible cases are listed as follows:
	
	case (1): ${d_i}(k) = 1, s_i^\Delta(k) = (0, {\Delta _i^L}(k-1) + 1)$,
	
	case (2): ${d_i}(k) = 1, s_i^\Delta(k) = (\Delta _i^L (k-1) + 1, {\Delta _i^L}(k-1) + 1)$, 
	
	case (3): ${d_i}(k) = 1, s_i^\Delta(k) = (0, {\Delta _i^R}(k-1) + 1)$,
	
	case (4): ${d_i}(k) = 1, s_i^\Delta(k) = (\Delta _i^L (k-1) + 1, {\Delta _i^R}(k-1) + 1)$,
	
	case (5): ${d_i}(k) = 0, s_i^\Delta(k) = (0, {\Delta _i^R}(k-1) + 1)$,
	
	case (6): ${d_i}(k) = 0, s_i^\Delta(k) = (\Delta _i^L (k-1) + 1, {\Delta _i^R}(k-1) + 1)$.
\end{itemize}

\begin{itemize}
	\item The one-step cost function is obtained form the given action and state. Thus, according to Problem \ref{prob1}, the one-step cost of MDP is defined as $c(s(k),d(k))= \sum\nolimits_{i = 1}^N f(\Delta _i^R(k))$. 
\end{itemize}

Given a scheduling policy $\pi$, the time-average cost function with an initial state $s_0$ is written as
\begin{equation} \label{23}
	\mathcal{C}(s_0,\pi) = \mathop { \limsup }\limits_{T \to \infty } \frac{1}{T} \sum\limits_{k=1}^ T \mathbb{E} \left[ {c(s(k),d(k)) | s_0, \pi} \right]. 
\end{equation}
Hence, the optimization problem is equivalent to find a policy minimizing the cost function \eqref{23}, that is 
\begin{problem} \label{prob2}
	\begin{align} \label{24}
		\min \limits_{\pi  \in \Pi } \ \ \mathcal{C}(s_0,\pi), \ \ \text{s.t. \ (3) holds.}
	\end{align}
\end{problem}

\subsection{Existence and Property of Deterministic Stationary Optimal Policy}
According to the theory present in \cite{guo2006average}, the solution of Problem \ref{prob2} corresponds to the optimal policy $\pi^*$ available within the set of deterministic stationary policies, provided that the time-average cost \eqref{23} is bounded. Assume that $\phi^*$ is the optimal value of $\mathcal{C}(s_0,\pi)$ under any initial value $s_0 \in \mathcal{S}$, then the Bellman equation can be formulated as follows:
\begin{equation} \label{eq26}
	\phi^* + Q(s) = \mathop {\min }\limits_{d \in \mathcal{A}} \left[ {c(s,d) + \sum\limits_{s' \in \mathcal{S}} {\Pr \{s'| {s,d} \}} Q(s')} \right], 
\end{equation}
where $Q(s)$ denotes the action-value function that reflects the expected objective value under the optimal policy $\pi^*$, and $\Pr (s'\left| {s,d} \right.)$ is defined in \eqref{eq11}. To simplify the expression, $s'$ denotes the state of next step, $s$ and $d$ denote $s(k)$ and $d(k)$, respectively. The optimal policy $\pi^*$ is deservedly written as
\begin{equation} \label{eq27}
	\pi^*(s) = \argmin_{d \in \mathcal{A}} \left[ {c(s,d) + \sum\limits_{s' \in \mathcal{S}} \Pr \{s'| {s,d} \} Q(s')} \right].
\end{equation}

In turn, we derive the existence theorem of deterministic stationary optimal policy based on the formulated MDP framework. 
\begin{theorem}\label{the1}
	If Problem \ref{prob2} is feasible, i.e., time-average cost \eqref{23} is bounded, there exist a constant $\phi^*$ and a function $Q(\cdot)$ such that the deterministic stationary optimal policy $\pi^*$ can be obtained by solving Bellman equation \eqref{eq26}.
\end{theorem}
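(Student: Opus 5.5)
We will verify the standard conditions for countable-state average-cost MDPs from \cite{guo2006average} (equivalently, Sennott's conditions) and then invoke the existence theorem there. The state space $\mathcal{S}$ is countable, being a product of $\mathbb{N}^{2N}$ and $\{0,1\}$, and the action space $\mathcal{A}$ is finite (at most $2^N$ decisions obeying (3)). The one-step cost $c(s,d)=\sum_i f(\Delta_i^R)$ is nonnegative, since both Examples \ref{exam2} and \ref{exam1} give $f\ge 0$, and it is nondecreasing in each $\Delta_i^R$. We work with the discounted problem with factor $\beta\in(0,1)$, value function $V_\beta(s)$, and relative value $h_\beta(s)=V_\beta(s)-V_\beta(z)$ for a fixed reference state $z$. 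We take $z=((0,1),\dots,(0,1),1)$, the state in which all buffers and receptions are freshest under a good channel.

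The conditions are checked in three steps.

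\textbf{(i)} $V_\beta(s)<\infty$ for all $s$ and $\beta$. This follows from the feasibility hypothesis: some policy $\pi_0$ has bounded average cost from every initial state, and a bounded long-run average forces a finite discounted sum by an Abelian argument. More concretely, the ages grow by at most one per slot, so $\sum_k\beta^k\mathbb{E}[c]$ is dominated by the average-cost bound.

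\textbf{(ii)} There is a constant $L$ with $h_\beta(s)\ge -L$. Because the dynamics are monotone (larger ages stay pathwise larger under the same arrival and channel realizations and the same actions), $V_\beta$ is nondecreasing in $(\Delta^L,\Delta^R)$. Hence states with the same channel component satisfy $h_\beta\ge 0$ with respect to the minimal age vector. The two channel states differ by a bounded amount, because the channel chain mixes with $\kappa_{01},\kappa_{10}>0$ and the expected cost of waiting a geometric time for the channel to switch is uniformly bounded. Together these give the lower bound $-L$.

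\textbf{(iii)} There is $M(s)$ with $h_\beta(s)\le M(s)$ and $\sum_{s'}\Pr\{s'|s,d\}M(s')<\infty$. We let $M(s)$ be the expected cost of the first passage from $s$ to $z$ under the feasible policy $\pi_0$ (or a round-robin policy). Then $h_\beta(s)\le M(s)$ by the standard bound $V_\beta(s)\le \mathbb{E}[\text{cost to reach } z]+V_\beta(z)$. Since one-step transitions only move ages by a bounded amount and have finitely many successors, $\sum_{s'}\Pr\{s'|s,d\}M(s')<\infty$ follows once $M$ is finite everywhere.

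With (i)--(iii) in hand, the theorem in \cite{guo2006average} (Sennott's) gives the conclusion. Taking a sequence $\beta_n\to1$ with $(1-\beta_n)V_{\beta_n}(z)\to\phi^*$ and $h_{\beta_n}\to Q$ pointwise (a diagonal argument over the countable $\mathcal{S}$, using $-L\le h_\beta\le M$), we obtain the average-cost optimality inequality. Finiteness of $\mathcal{A}$ then lets us pass to the minimizer, which yields a deterministic stationary $\pi^*$ as in \eqref{eq27} with constant average cost $\phi^*$ independent of $s_0$. Upgrading the inequality to the equality \eqref{eq26} uses dominated convergence with the integrable bound $M$ from (iii).

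The main obstacle is step (iii), which needs $M(s)<\infty$, that is, positive recurrence of $z$ with finite expected first-passage cost from every $s$. Reaching $z$ requires, for every sensor, a fresh arrival followed by a successful transmission while the others are served, with at most $M$ transmissions per slot and a Markov channel. We would first try the cyclic policy that serves sensors in round-robin and re-transmits whenever a fresh packet exists. For the return time to $z$ we would use geometric tail bounds, since each sensor is refreshed within a geometric number of cycles with success probability at least $\min_i\lambda_i\min_{\vartheta}p_i^\vartheta$ per attempt. We must then show that $\mathbb{E}[\sum f(\Delta^R)]$ along this excursion is finite. Because $f$ may be exponential, this holds only when the geometric tails decay faster than $f$ grows, which is exactly where the feasibility hypothesis (bounded cost) must be used. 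We would therefore replace the round-robin policy by the feasible policy $\pi_0$ and argue that a bounded average cost implies finite expected excursion cost via a renewal-reward argument.
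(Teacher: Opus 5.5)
You are taking a different route from the paper. The paper invokes the Cavazos-Cadena--Sennott sufficient conditions: a stationary policy inducing an irreducible positive recurrent chain with finite average cost, plus finiteness of the sublevel sets $\{s: c(s)<r\}$. You instead try to verify Sennott's three conditions directly and then run the vanishing-discount limit.

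As written, your argument fails at the reference state. When $M<N$, the state $z=((0,1),\dots,(0,1),1)$ can never be visited at a slot $k\ge 2$. Since $\Delta_i^R(k-1)\ge 1$, \eqref{13} shows that $\Delta_i^R(k)=1$ forces $\gamma_i^R(k)=1$. By \eqref{3}, at most $M$ sensors can be received in one slot. So from every state with all $\Delta_i^R\ge 1$ (every state visited after the first slot), the first-passage cost to $z$ is $+\infty$ under every policy. The bound $M(s)$ in (iii) is therefore infinite. No renewal-reward argument for $\pi_0$ can rescue this, because $z$ belongs to no recurrent class.

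The channel comparison in (ii) has the same defect. Waiting for the channel to switch lets the ages grow, so you arrive at a state that dominates $((0,1),\dots,(0,1),0)$. Monotonicity of $V_\beta$ then points the wrong way for bounding $V_\beta(z)$ above by $V_\beta$ of that minimal bad-channel state plus a constant.

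The obvious repair is to take $z$ in the recurrent class of a stationary finite-cost policy $g$. But that removes the only justification you gave for (ii), namely that $z$ is the componentwise minimum. The missing ingredient is exactly the paper's second condition:

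\begin{enumerate}
\item Pick $r$ above $c(z)$ and above the average cost of $g$, so that $(1-\beta)V_\beta(z)<r$ for $\beta$ near $1$. Then $G=\{s:c(s)<r\}$ is finite.
\item A $\beta$-optimal trajectory pays at least $r$ per slot until it enters $G$. This gives $V_\beta(s)\ge\min_{s'\in G}V_\beta(s')$ for all $s$.
\item Hence $h_\beta(s)\ge-\max_{s'\in G}C_{z\to s'}(g)$, where $C_{z\to s'}(g)$ is the finite first-passage cost from $z$ to $s'$ under $g$.
\end{enumerate}

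That is precisely what the theorem cited by the paper packages, so once repaired your proof reduces to the paper's.

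Two further points must be handled on that route. First, $c$ does not depend on $\Delta^L$, so the sets $\{s: c(s)<r\}$ are finite only after restricting $\mathcal{S}$ to the closed set of reachable states, on which $\Delta_i^L\le\Delta_i^R$ for all $i$. Second, the renewal-reward step needs the feasible policy to be stationary with $z$ accessible from every state. Bounded average cost of an arbitrary $\pi\in\Pi$ does not by itself provide this.
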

\begin{proof}
	See Appendix \ref{Proof Theorem 1}.
\end{proof}

\begin{remark}
	By reducing the 0-1 knapsack problem, Problem \ref{prob2} can be proven to be NP-hard. When the infinite-horizon cost $\mathcal{C}(s_0,\pi)$ is focused on finite-horizon, there exists a number $\mathcal{\bar C}$ that $\mathcal{C}(s_0,\pi) \leq \mathcal{\bar C}$. The scheduling time for each sensor can be used as a weight. Minimizing the sum of AoI function via scheduling sensors is viewed as selecting knapsacks to maximize object value subject to the total time. Thus, Problem \ref{prob2} is verified as a NP-hard problem. Then, MDP framework is built for this problem and Theorem \ref{the1} establishes the existence of a deterministic stationary optimal policy, whose structure is inherently determined by the action–value function $Q(s)$. Nevertheless, deriving this optimal policy requires solving the recursive relation in \eqref{eq26}, making a closed-form characterization intractable. This motivates the need to investigate the structural property to guide the design of scheduling policy.
\end{remark}

Then, based on the established Bellman equation \eqref{eq26}, the monotonicity analysis of the optimal policy $\pi^*$ will be analyzed. In order to express the iteration process, the action-based cost function of the $n$-th iteration is defined as 
\begin{equation}
	\Theta_n(s, d) = c (s,d) + \sum\limits_{s' \in \mathcal{S}} {\Pr \{s'| {s,d} \}} Q_n(s'). \label{35}
\end{equation}
Define $\tilde \Delta^L = (\Delta_i^L)_{i \in \mathcal{N}}$ and $\tilde \Delta^{R} = (\Delta_i^R)_{i \in \mathcal{N}}$. Then, we can further obtain the following lemma:

\begin{lemma} \label{lemma2}
	Given a channel state $\vartheta$, for two states $s^a = (\tilde\Delta^{L,a}, \tilde \Delta^{R,a}, \vartheta)$ and $s^b = (\tilde\Delta^{L,b}, \tilde\Delta^{R,b}, \vartheta)$, there exists $Q(s^b) \geq Q(s^a)$ if $\tilde\Delta^{L,b}  \succeq  \tilde\Delta^{L,a}$ and $ \tilde \Delta^{R,b} \succeq \tilde \Delta^{R,a}$.
\end{lemma}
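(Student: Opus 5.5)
We argue by value iteration combined with induction on the iteration index. Let $Q_0(\cdot)\equiv 0$ and define $Q_{n+1}(s)=\min_{d\in\mathcal{A}}\Theta_n(s,d)-\Theta_n(s_{\mathrm{ref}},d^*_{\mathrm{ref}})$, which is relative value iteration with a fixed reference state. Under the feasibility assumption of Theorem \ref{the1}, the argument of \cite{guo2006average} used in Appendix \ref{Proof Theorem 1} gives $Q_n(s)\to Q(s)$ pointwise, at least along a subsequence (or via the discounted approximation $\beta\to1$ used there). Monotonicity in the partial order $\succeq$ passes to pointwise limits, and subtracting a constant does not affect it. It therefore suffices to show that, for every $n$ and every fixed $\vartheta$, $Q_n(\tilde\Delta^{L},\tilde\Delta^R,\vartheta)$ is nondecreasing in $(\tilde\Delta^L,\tilde\Delta^R)$.

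The induction step needs two facts. First, the one-step cost $c(s,d)=\sum_i f(\Delta_i^R)$ is nondecreasing in $\tilde\Delta^R$, because $f$ is nondecreasing (Examples \ref{exam2} and \ref{exam1}); it does not depend on $\tilde\Delta^L$ or $d$. Second, we need a coupling of the transitions. Fix an action $d$ and the channel $\vartheta$. Drive both $s^a$ and $s^b$ with the same realizations of the next channel state, the arrivals $\gamma_i^L$, and the successes $\theta_i$. Since $\Pr\{s'|s,d\}$ factorizes as in \eqref{eq11}, the probabilities of these driving events depend only on $\vartheta$, $\lambda_i$, $p_i^\vartheta$ and $d_i$, not on the ages. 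Case by case from \eqref{16}:
\begin{itemize}
\item In cases (1)--(2), the AoRI becomes $\Delta_i^{L}+1$ for both states, and $\Delta_i^{L,b}+1\ge\Delta_i^{L,a}+1$.
\item In cases (3)--(6), the AoRI becomes $\Delta_i^R+1$, ordered because $\Delta_i^{R,b}\ge\Delta_i^{R,a}$.
\item The next AoLI is either $0$ for both or $\Delta_i^L+1$ for both, hence ordered.
\item The next channel state coincides.
\end{itemize}
So for each realization the successor states satisfy $s'^b\succeq s'^a$ with the same $\vartheta'$. By the induction hypothesis, $\sum_{s'}\Pr\{s'|s^b,d\}Q_n(s')\ge\sum_{s'}\Pr\{s'|s^a,d\}Q_n(s')$. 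Hence $\Theta_n(s^b,d)\ge\Theta_n(s^a,d)$ for every feasible $d$ satisfying \eqref{3}. The feasible action set is the same for both states, so taking the minimum over $d$ preserves the inequality, and we obtain $Q_{n+1}(s^b)\ge Q_{n+1}(s^a)$.

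The main obstacle is the limit step on the countably infinite, unbounded state space: we must ensure the iterates converge to the $Q$ of \eqref{eq26}. We would first try the discounted-cost route. Prove the same monotonicity for the $\beta$-discounted value function $V_\beta$, where value iteration converges since costs are nonnegative (monotone increasing iterates from $0$). Then note that $Q(s)=\lim_{\beta\to1}(V_\beta(s)-V_\beta(s_{\mathrm{ref}}))$ along a subsequence, which is exactly how the existence in Theorem \ref{the1} is obtained under the boundedness assumption. Monotonicity of each $V_\beta$ then transfers to $Q$.
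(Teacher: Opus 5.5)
Your proposal is correct and follows essentially the same route as the paper's proof in Appendix B: induction over the value-iteration iterates, comparing $s^a$ and $s^b$ under a common action so that the expected next-stage values are ordered, then passing to the limit. Taking the minimum of two ordered functions of $d$ is the same device as the paper's evaluation of $s^a$ at $\pi_n^*(s^b)$.

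Your version is more careful than the paper's on three points:
\begin{enumerate}
\item You state explicitly that the one-step cost is monotone in $\tilde\Delta^R$, which the paper uses only implicitly.
\item You spell out the coupling of the transitions.
\item You justify the limit step through the vanishing-discount construction, whereas the paper simply asserts that relative value iteration converges to $Q$ from any initial value.
\end{enumerate}
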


\begin{proof}
	See Appendix \ref{Proof lemma 1}.
\end{proof}

Lemma \ref{lemma2} reveals that the action-value function $Q(s)$ increases monotonically with the staleness of information under a fixed channel state. The monotonicity of $Q(s)$ theoretically explains that the optimal policy tends to schedule sensors with more outdated data. Some existing literature introduces a threshold structure to design the optimization algorithm, such as \cite{wei_double_2023}. However, seeking a scheduling policy for multiple sensors possesses a high computational complexity, which motivates us to restrict the search space accordingly. 

\subsection{Structure-Informed Scheduling Policy}
Inspired by the approximate dynamic programming \cite{bertsekas_2007}, we formulate a distributed framework to decompose the optimization problem based on the structural property in Lemma \ref{lemma2} and decrease the complexity via introducing the randomized policy that easy to implement. We first define a scheduling policy $\pi_R$ that sensors are chosen randomly with probability $p_i^R \in (0,1)$, $i \in \mathcal{N}$, where $p_i^R$ is constrained by $\sum_{i = 1}^N p_i^R \leq M$. Then, we can further derive a linear approximated structure of the optimal policy.

\begin{theorem} \label{thm2}
	There exists an approximation of the optimal policy defined in \eqref{eq27}, denoted as follows:
	\begin{align}\label{41}
		\tilde \pi^* (s) \! = \! \argmin_{d \in \mathcal{A}} \! \left[ {c(s,d) \!+\! \sum\limits_{s' \in \mathcal{S}} \! \Pr \{s'| {s,d} \} \! \sum \limits_{i \in \mathcal{N}} Q_i^R(s_i')} \right]
	\end{align}
	where $Q_i^R(s_i)$ is the value function corresponding to sensor $i$ under the randomized scheduling action. 
\end{theorem}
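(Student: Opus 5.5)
The plan is to follow the standard one-step policy-improvement (rollout) argument from approximate dynamic programming \cite{bertsekas_2007}. The optimal $Q(\cdot)$ in \eqref{eq26} is replaced by the relative value function of the randomized base policy $\pi_R$. The key is to show that, under $\pi_R$, this relative value function decomposes additively across sensors.

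\textbf{Step 1: average-cost equation under $\pi_R$.} Fix $\pi_R$ with probabilities $p_i^R$ satisfying $\sum_i p_i^R\le M$ (drawn so that \eqref{3} holds almost surely, e.g., by dependent rounding). Under $\pi_R$, sensor $i$ is scheduled with marginal probability $p_i^R$, independently of the AoI states. By \eqref{16}, the pair $s_i^\Delta$ then evolves as a Markov chain driven only by its own arrival process, the common channel $\vartheta$, and its own scheduling coin. Provided each per-sensor cost is finite, which I would take from the feasibility assumption of Theorem \ref{the1}, the average-cost theory of \cite{guo2006average} used there gives, for each $i$, a constant $\phi_i^R$ and a function $Q_i^R(s_i)$ with $s_i=(\Delta_i^L,\Delta_i^R,\vartheta)$. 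They satisfy
\[
\phi_i^R+Q_i^R(s_i)=f(\Delta_i^R)+\sum_{s_i'}\Pr\{s_i'|s_i,\pi_R\}\,Q_i^R(s_i').
\]

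\textbf{Step 2: additive decomposition.} The total cost is $c=\sum_i f(\Delta_i^R)$. By \eqref{eq11}, conditional on the channel path the sensor transitions factorize. Hence, by linearity of expectation, $\phi^R=\sum_i\phi_i^R$ and $Q^R(s)=\sum_{i\in\mathcal N}Q_i^R(s_i)$ solve the Poisson equation of the joint chain under $\pi_R$. I would check this by summing the $N$ per-sensor equations and using that the marginals of the joint kernel are the per-sensor kernels. The shared channel $\vartheta$ appears in every $s_i$, but this causes no difficulty, because linearity only needs the marginals.

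\textbf{Step 3: approximation and improvement.} Substitute $Q^R$ for $Q$ in \eqref{eq27}, which gives exactly \eqref{41}. To justify calling $\tilde\pi^*$ an approximation of $\pi^*$, I would invoke the policy-improvement property. Since $\tilde\pi^*$ minimizes the right-hand side with $Q^R$, we have
\[
c(s,\tilde\pi^*(s))+\mathbb E[Q^R(s')]\le c(s,\pi_R)+\mathbb E[Q^R(s')]=\phi^R+Q^R(s).
\]
Telescoping this over $T$ steps and dividing by $T$ yields $\mathcal C(s_0,\tilde\pi^*)\le\phi^R$, provided $\mathbb E[Q^R(s(T))]/T\to0$. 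So $\tilde\pi^*$ is one step of policy iteration from $\pi_R$ toward $\pi^*$. Lemma \ref{lemma2}-type monotonicity, applied per sensor to $Q_i^R$, keeps the structure consistent with that of $\pi^*$.

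\textbf{Main obstacle.} The hard part is the existence and growth control of $Q_i^R$. With exponential $f$ as in Example \ref{exam2}, the per-sensor average cost under $\pi_R$ may be infinite unless $\lambda_i$, $p_i^R$, and $p_i^\vartheta$ are large enough. Even when it is finite, I need $\mathbb E[Q^R(s(T))]=o(T)$ for the telescoping step. I would first impose the condition that each per-sensor chain under $\pi_R$ has finite average cost; this is a geometric-tail condition on $\Delta_i^R$ versus the growth rate of $f$. I would then verify the Lyapunov-type conditions of \cite{guo2006average} sensor by sensor, using $Q_i^R$ itself as the Lyapunov function.
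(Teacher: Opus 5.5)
Your proposal follows the paper's proof. Both write per-sensor average-cost equations under the randomized policy $\pi_R$, substitute the additive form $Q^R(s)=\sum_{i\in\mathcal N}Q_i^R(s_i)$ into \eqref{eq27} to obtain \eqref{41}, and finish with a policy-improvement comparison against $\pi_R$. The paper carries out that last step with gain--bias vectors and Ces\`aro limits rather than telescoping, and it simply posits the additive form, which you correctly verify through the marginal kernels. Two caveats. First, finiteness of the per-sensor cost under $\pi_R$ cannot be drawn from the feasibility hypothesis of Theorem~\ref{the1}, since that hypothesis concerns some policy and not $\pi_R$; it has to be imposed as in your last paragraph (the paper makes the same leap). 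Second, the telescoping step only needs $Q^R$ bounded below, not $\mathbb{E}[Q^R(s(T))]=o(T)$. A lower bound holds here because each $Q_i^R$ is nondecreasing in $(\Delta_i^L,\Delta_i^R)$ for fixed $\vartheta$ (couple two copies of the sensor-$i$ chain), and every reachable $s_i$ dominates $(0,1,\vartheta)$.
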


\begin{proof}
	See Appendix \ref{Proof Theorem 2}.
\end{proof}

\begin{remark}
	Through the linear decomposability of action-value function $Q(s)$, Theorem \ref{thm2} approximates the original optimal policy while ensuring better performance than any randomized scheduling policies. Besides, similar to algorithm in \cite{ying_cui_delay-aware_2012}, this approximation can effectively simplify the computation. Given $\bar \Delta_i^L$ and $\bar \Delta_i^R$ as the upper limits of AoLI and AoRI, respectively. It is evident that when $Q(s)$ is transformed into $\sum \nolimits_{i \in \mathcal{N}}Q_i^R(s)$, the computation complexity is reduced to $O(\sum \nolimits_{i \in \mathcal{N}} 2 \bar \Delta_i^R (\bar \Delta_i^L + 1))$ from $O(\prod \nolimits_{i \in \mathcal{N}} 2 \bar \Delta_i^R (\bar \Delta_i^L + 1))$, which is linearly related to the number of sensors rather than exponentially related.
\end{remark}

\begin{algorithm}[t]
	\caption{Structure-Informed Scheduling Policy}
	\label{alg1}
	\begin{algorithmic}[1]
		\renewcommand{\algorithmicrequire}{\textbf{Input:}}
		\renewcommand{\algorithmicensure}{\textbf{Output:}}
		\REQUIRE $N, M, \lambda_i, \bar\Delta_i^L, \bar\Delta_i^R, \varrho, p_i^R, p_i^\vartheta, \kappa_{00}, \kappa_{11}$ and $n_{max}$.
		\ENSURE Scheduling policy $\tilde \pi^*$.
	    \FOR{$i = 1$ to $N$}
		\STATE Initialize $Q_{i,n}^R(s_i) \leftarrow 0, \forall s_i \in \mathcal{S}_i, n\leftarrow1$;
		\REPEAT
		\FOR{all $s_i \in \mathcal{S}_i$}
		\STATE $\begin{aligned} & Q_{i,n+1}^R(s_i) \leftarrow  f(\Delta_i^R) \\ & + {\sum\nolimits_{s_i' \in \mathcal{S}_i} {\mathbb{E}^{\pi_R}[\Pr \{s_i'\left| {s_i,d_i} \right.}\}] Q_{i,n+1}^R(s_i')}; \end{aligned}$
		\STATE $Q_{i,n+1}^R(s_i) \leftarrow Q_{i,n+1}^R(s_i) - Q_{i,n+1}^R(s_i^r)$;
		\ENDFOR
		\STATE $n \leftarrow n+1$;
		\UNTIL{$\Vert Q_{i,n}^R(s_i) - Q_{i,n-1}^R(s_i^r) \Vert  \leq \varrho$ or $n = n_{max}$;}
		\ENDFOR
		\FOR{all $s \in \mathcal{S}$}
		\IF{$\exists i \in \mathcal{N}$ and $s' \in \mathcal{S}$ such that $\tilde \pi^*(s') = d \in \mathcal{A}$ with $d_i = 1, \tilde\Delta^L = {{}\tilde\Delta^L}', \vartheta = \vartheta'$ and $\Delta_i^R \geq {\Delta_i^R}'$ only for the $i$-th sensor,}
		\STATE $\tilde \pi^*(s') \leftarrow d$;
		\ELSE 
		\STATE Obtain $\tilde \pi^*(s')$ by \eqref{41};
		\ENDIF
		\ENDFOR
		\RETURN Scheduling policy $\tilde \pi^*$.
		
	\end{algorithmic}
\end{algorithm}

Some works have given the threshold structure for the optimal policy hoping to reduce the computational cost \cite{wu_optimal_2020}. Next, we further analyze the similar structural property of the approximated policy proposed in Theorem \ref{thm2}. 

\begin{theorem} \label{the3}
	The proposed approximated optimal policy possesses a threshold structure with respect to (w.r.t.) $\Delta_i^R$, such that there exists the following result for the $i$-th sensor, $i \in \mathcal{N}$:
	\begin{equation} \label{44}
		\tilde \pi_i^* =
		\begin{cases}
			1, & \text{if} \ \ \Delta_i^R \geq {\Delta_{0,i}^R}^*, \vartheta = 0,\\
			1, & \text{if} \ \ \Delta_i^R \geq {\Delta_{1,i}^R}^*, \vartheta = 1,\\
			0, & \text{otherwise},
		\end{cases} 
	\end{equation}
	where ${\Delta_{0,i}^R}^*$ and ${\Delta_{1,i}^R}^*$ are denoted as
	\begin{align*}
		& {\Delta_{0,i}^R}^* \! \triangleq \! \min \! \left\{ {{\Delta _i^R}\left| {\Theta^\dag (s,d) \! \leq \! \Theta^\dag (s,d'), \vartheta = 0, d \ne d', d_i = 1} \right.} \right\} \!, \\
		& {\Delta_{1,i}^R}^* \! \triangleq \! \min \! \left\{ {{\Delta _i^R}\left| {\Theta^\dag (s,d) \! \leq \! \Theta^\dag (s,d'), \vartheta = 1, d \ne d', d_i = 1} \right.} \right\} \!, \\
		& \Theta^\dag (s,d) \! \triangleq \! c(s,d) + \sum\nolimits_{s' \in \mathcal{S}} \Pr (s'\left| {s,d} \right.) \sum \nolimits_{i \in N} Q_i^R(s_i').
	\end{align*}
\end{theorem}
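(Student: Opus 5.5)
Fix the channel state $\vartheta$, the AoLI vector $\tilde\Delta^L$ and the AoRI values of all sensors other than $i$. Regard $\Theta^\dag(s,d)$ as a function of $\Delta_i^R$ alone. The threshold claim in \eqref{44} then reduces to a single-crossing property: if scheduling sensor $i$ (some $d$ with $d_i=1$) is weakly preferred to every $d'$ at some value of $\Delta_i^R$, it stays weakly preferred at every larger value. Once that holds, the minimal such value is exactly ${\Delta_{\vartheta,i}^R}^*$ as defined. The thresholds depend on $\vartheta$ because $p_i^\vartheta$ and the channel transition law enter $\Theta^\dag$.

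First I use the separable form of $\Theta^\dag$ inherited from Theorem \ref{thm2}. The transition \eqref{eq11} factorizes over sensors, and the approximate value is $\sum_j Q_j^R(s_j')$. Hence $\Theta^\dag(s,d)$ splits into $c(s,d)$ plus a sum of per-sensor expectations $\mathbb{E}[Q_j^R(s_j')\mid s_j,\vartheta,d_j]$. The cost $c$ does not depend on $d$, so in any difference $\Theta^\dag(s,d)-\Theta^\dag(s,d')$ only the sensors whose decisions differ contribute. For sensor $i$, the marginal benefit of scheduling is read from cases (1)--(6):
\begin{align*}
G_i(\Delta_i^R) \triangleq {} & \mathbb{E}[Q_i^R(s_i')\mid d_i=0]-\mathbb{E}[Q_i^R(s_i')\mid d_i=1] \\
= {} & p_i^\vartheta\Bigl(\lambda_i\bigl[Q_i^R(0,\Delta_i^R+1)-Q_i^R(0,\Delta_i^L+1)\bigr] \\
& +(1-\lambda_i)\bigl[Q_i^R(\Delta_i^L+1,\Delta_i^R+1)-Q_i^R(\Delta_i^L+1,\Delta_i^L+1)\bigr]\Bigr).
\end{align*}
For any pair $d,d'$ with $d_i=1$ and $d'_i=0$ that agree elsewhere, or that differ only by swapping $i$ with another sensor $j$, the difference $\Theta^\dag(s,d')-\Theta^\dag(s,d)$ equals $G_i(\Delta_i^R)$ minus a quantity that does not depend on $\Delta_i^R$. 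It therefore suffices to show that $G_i$ is nondecreasing in $\Delta_i^R$.

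The main step is monotonicity of $G_i$. The subtracted terms $Q_i^R(0,\Delta_i^L+1)$ and $Q_i^R(\Delta_i^L+1,\Delta_i^L+1)$ do not depend on $\Delta_i^R$. So I need $Q_i^R(\cdot,\Delta^R)$ nondecreasing in $\Delta^R$. This is the single-sensor analogue of Lemma \ref{lemma2}, and I would prove it the same way. Run value iteration on the per-sensor relative-value recursion in Algorithm \ref{alg1}, under the randomized policy $\pi_R$ with fixed probability $p_i^R$. Show by induction that each iterate $Q_{i,n}^R$ is nondecreasing in both AoLI and AoRI. This works because $f$ is nondecreasing (true for both Examples) and each transition outcome shifts $(\Delta^L,\Delta^R)$ monotonically. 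The monotonicity then passes to the limit. I must also respect the multi-sensor constraint \eqref{3}. When $\Delta_i^R$ increases, $G_i$ increases while all other sensors' marginal gains stay fixed. So in the knapsack-like selection of the top $M$ gains, sensor $i$, once selected, remains selected. This gives the threshold in each channel state $\vartheta\in\{0,1\}$.

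I expect the main obstacle to be making the inductive monotonicity of $Q_i^R$ rigorous. Two points need care. First, the states are truncated at $\bar\Delta_i^L,\bar\Delta_i^R$, so the boundary transitions must preserve order. Second, the subtraction of the reference state $Q(s_i^r)$ must not break the ordering; it does not, since a constant shift preserves it. I would also check the degenerate case $\Delta_i^R\le\Delta_i^L+1$. There scheduling cannot reduce the AoRI, so $G_i\le 0$ and sensor $i$ is not scheduled. This is consistent with the threshold form, since the set where $d_i=1$ is optimal is an upper interval in $\Delta_i^R$.
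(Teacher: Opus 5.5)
Your proposal is correct and follows essentially the paper's argument: monotonicity of $Q_i^R$ by induction on the per-sensor value iteration, then increasing differences of $\Theta^\dag$ in $(\Delta_i^R,d_i)$. Your explicit $G_i$ is in fact more careful than step (b) of \eqref{46}, which treats the states reached under $d$ from $s^a$ and $s^b$ as identical even though that holds only on the success branch; your computation shows the failure branch cancels against $d_i'=0$ and leaves the factor $p_i^\vartheta$.

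One minor slip is in your side remark on the degenerate case. For $\Delta_i^R=\Delta_i^L+1$ a successful update still lowers the next AoRI (to $\Delta_i^R$ rather than $\Delta_i^R+1$), so $G_i\ge 0$ there, and the case with $G_i\le 0$ is $\Delta_i^R\le\Delta_i^L$. This does not affect the threshold conclusion.
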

\begin{proof}
	See Appendix \ref{Proof Theorem 3}.
\end{proof}

Theorem \ref{the3} indicates that a sensor is to be scheduled for transmission once AoRI exceeds a threshold ${\Delta_i^R}^*$, which reflects the urgency to refresh the outdated information. Moreover, this action demonstrates a certain persistence that if $\tilde \pi^* (s) = 1$ under current $\Delta_i^R$ , than for any $(\Delta_i^R)' > \Delta_i^R$ there exists $\tilde \pi^* (s') = 1$. This provides an intuitive scheduling rule that allows efficient implementations without solving the full MDP. The proposed structure-informed scheduling policy (SISP) is shown in Algorithm \ref{alg1}, which operates in two main phases. In lines 1-10, the relative value iteration algorithm is employed to update the $i$-th decomposed value function $Q_i^R(s_i)$ proposed in Theorem \ref{thm2}, where $Q_{i,n}^R(s_i)$ is the $n$-th iteration and a reference state $s_i^r$ is used for normalization in line 6 to ensure numerical stability. Lines 11-17 constructs the optimal policy $\tilde \pi^*$, where the threshold structure proved in Theorem \ref{the3} is utilized. It is obvious that the scheduling decision is made directly if the condition in line 12 is meet and this structure-informed pruning significantly reduces the computational overhead.

\subsection{Stability Analysis of the Monitoring Performance}
In this section, we analyze the conditions required to ensure the convergence of monitoring performance. Due to the uncertainties introduced by random data arrivals and fluctuating channel states, it is difficult to maintain consistently successful update transmissions, which may worsen the monitoring performance. Therefore, it is necessary to identify the range of transmission probabilities and random arrival rates that can guarantee stable monitoring, thereby providing a guidance for the design of real remote monitoring networks.

For the scenario of a single sensor system, studies in \cite{wang2021remote} and \cite{wei_double_2023} have analyzed the conditions for ensuring the stability of remote state estimation under Markov channel states, but this class of theoretical results cannot be directly applied to the network system consisting of multiple sensors. In \cite{chang_lightweight_2024}, the stability conditions of remote state estimation in multi-sensor systems have been investigated, while channel is limited to be static and the impact of random data arrivals is neglected. Referring to the threshold structure proposed in Theorem \ref{the3}, we utilize the limitation distribution of the value of $\Delta_i^R$ to derive the sufficient and necessary condition for the stable monitoring performance. To pursue a more universal result, we consider the more complex AoI function introduced in Example \ref{exam1}.

\begin{figure}[t]
	\renewcommand{\figurename}{Fig.}
	\centering
	\subfloat[Sensor 1 with $\lambda_1 = 0.9$.\label{fig3.1}]{\includegraphics[width=0.35\textwidth]{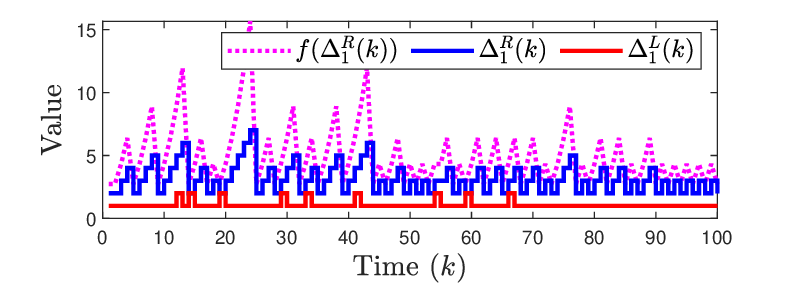}} 
	
	\subfloat[Sensor 2 with $\lambda_2 = 0.9$.\label{fig3.2}]{\includegraphics[width=0.35\textwidth]{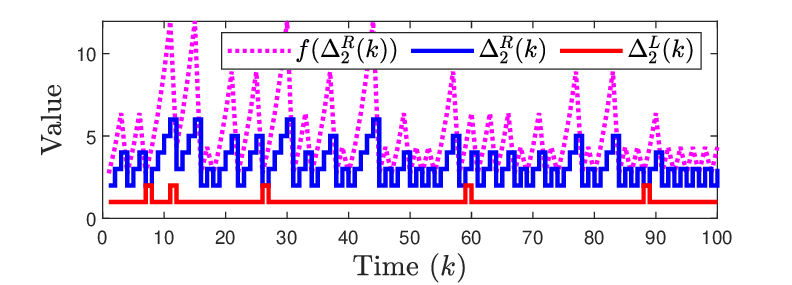}}
	
	\subfloat[Sensor 2 with $\lambda_2 = 0.5$.\label{fig3.3}]{\includegraphics[width=0.35\textwidth]{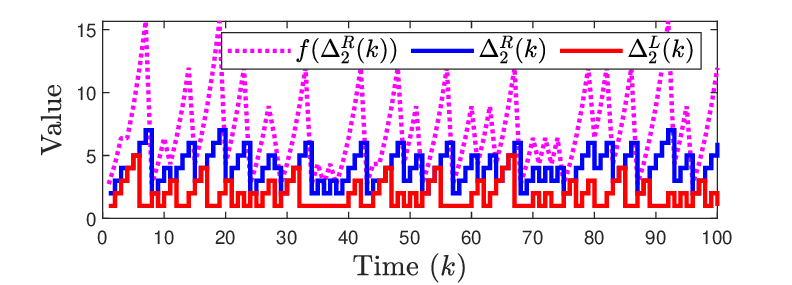}}
	\caption{The evolution of AoI function, AoLI and AoRI for two sensors.}
	\label{fig3}
\end{figure}

\begin{theorem} \label{thm4}
	For the AoI function defined in \eqref{3.8}, the time-average cost $\mathcal{C}(s_0,\pi)$ is bounded under the proposed scheduling policy $\hat \pi^*$ if and only if the following condition is satisfied:
	\begin{align} \label{eq21}
		\rho (\Omega (I - \lambda_i \mathcal{P}_i)) < \frac{1}{\rho^2 (A_i)}, \forall i \in \mathcal{N},
	\end{align}
	where $\Omega = \begin{bmatrix}
		\kappa_{00} & \kappa_{01}\\
		\kappa_{10} & \kappa_{11} \\
	\end{bmatrix}$ and $\mathcal{P}_i = \text{diag} \{p_i^0, p_i^1\}$.
\end{theorem}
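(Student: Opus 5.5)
The plan is to reduce boundedness of $\mathcal{C}(s_0,\hat\pi^*)$ to summability of a geometric-type series driven by the tail of the stationary distribution of $\Delta_i^R$. Because the cost is separable, $\mathcal{C}(s_0,\pi)=\sum_i \limsup \frac1T\sum_k \mathbb{E}[f(\Delta_i^R(k))]$, it suffices to treat each sensor separately. Expanding \eqref{3.8}, $f(\Delta)=\text{Tr}(A_i^{\Delta}\bar P_i (A_i^T)^{\Delta})+\sum_{l<\Delta}\text{Tr}(A_i^l\Sigma_i^\omega (A_i^T)^l)$. By a Jordan-form argument (assuming $\bar P_i\succ 0$, $\Sigma_i^\omega\succ0$), there are constants $c_1,c_2>0$ and a polynomial factor such that $c_1\rho^{2\Delta}(A_i)\le f(\Delta)\le c_2 \Delta^{m}\rho^{2\Delta}(A_i)$. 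Hence the long-run average of $\mathbb{E}[f(\Delta_i^R)]$ is finite if and only if $\sum_{n}\Pr\{\Delta_i^R\ge n\}\,\rho^{2n}(A_i)<\infty$ along the limiting distribution. The strict inequality in \eqref{eq21} absorbs the polynomial factor.

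Next I will use the threshold structure of Theorem \ref{the3}. Once $\Delta_i^R\ge \max\{{\Delta_{0,i}^R}^*,{\Delta_{1,i}^R}^*\}$, sensor $i$ is scheduled in every slot, irrespective of the channel state. The claim that the constraint $M$ does not block a sensor forever needs care; I will argue that other sensors' AoRI are reset in finite expected time, or else bound the blocking probability. In this regime, an increment of AoRI beyond the threshold requires a failure event in each consecutive slot.

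The key subtlety is the dual-AoI coupling. A successful delivery resets $\Delta_i^R$ to $\Delta_i^L+1$, not to $1$. So large $\Delta_i^R$ persists unless a delivery happens \emph{and} a fresh packet has arrived since the last delivery. I will therefore define the "bad" event in a slot as: no successful delivery of a packet fresher than the current one. For the tail I will track the joint process $(\vartheta(k),\text{event})$. Let $u_n$ be the 2-vector of probabilities that $\Delta_i^R$ has survived $n$ steps above threshold while ending in channel state $\vartheta\in\{0,1\}$. The one-step transition is then $u_{n+1}^T=u_n^T\,\Omega\,(I-\lambda_i\mathcal{P}_i)$ (or the transposed ordering depending on whether $p^\vartheta$ uses $\vartheta(k-1)$, per \eqref{16}). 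Here $\lambda_i p_i^\vartheta$ is the per-slot probability of an arrival combined with a successful delivery in state $\vartheta$, which resets AoRI to a bounded value. This yields $\Pr\{\Delta_i^R\ge \bar\Delta+n\}\asymp \|u_0^T(\Omega(I-\lambda_i\mathcal{P}_i))^n\|$. Since the matrix is positive, Perron–Frobenius gives growth exactly $\rho(\Omega(I-\lambda_i\mathcal{P}_i))^n$ up to constants, for both the upper and lower bounds.

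The last step combines the two estimates. The series $\sum_n \rho(\Omega(I-\lambda_i\mathcal{P}_i))^n\rho^{2n}(A_i)\,n^m$ converges if and only if \eqref{eq21} holds. The lower bound gives necessity: the series diverges when the inequality fails, because both Perron–Frobenius and $f$ admit lower bounds with matching rates. Converting stationary-tail finiteness into a bounded $\limsup$ time average uses the renewal structure at reset times, with an ergodic argument on the finite-channel Markov chain.

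I expect the main obstacle to be the exact justification of the survival recursion. There are two difficulties. First, it must handle the case where a successful delivery transmits a stale buffered packet, where AoRI only drops to $\Delta_i^L+1$. Second, it must handle the interaction with other sensors under the $M$ constraint. I plan to handle both by sandwiching: upper-bound via a modified chain in which a stale delivery counts as "no reset", and lower-bound via a chain that ignores blocking.
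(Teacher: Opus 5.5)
Your plan is essentially the paper's own argument: always scheduled above the largest threshold, a two-state tail recursion driven by $\Omega(I-\lambda_i\mathcal{P}_i)$, Perron--Frobenius, and $f(\Delta)\approx\rho^{2\Delta}(A_i)$. It breaks at the same step as the paper's appendix, namely the survival recursion $u_{n+1}^T=u_n^T\Omega(I-\lambda_i\mathcal{P}_i)$. You correctly note that a success sets $\Delta_i^R(k)=\Delta_i^L(k-1)+1$, but the recursion does not use this. It treats ``arrival and success in the same slot'' (probability $\lambda_i p_i^\vartheta$) as the reset event. By case (1) of \eqref{16}, however, that event gives $\Delta_i^R(k)=\Delta_i^L(k-1)+1$, which is large whenever the buffer was stale, because the packet arriving in slot $k$ is not the one sent in slot $k$. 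Conversely, an arrival in an earlier slot followed by a later success does reset AoRI. So the probability of your bad event given $\vartheta$ is not $1-\lambda_i p_i^\vartheta$. It depends on whether the buffer holds an undelivered packet, and a vector indexed by $\vartheta$ alone is not a Markov description of the tail; your $2\times 2$ recursion is exact only for a bufferless model. With the buffer, in the always-scheduled regime, $\{\Delta_i^R(k)\ge n\}$ is exactly the event that no arrival in the last $n$ slots is followed by a success later in that window. Adding a freshness bit to the channel state, the survival matrix becomes block upper-triangular with diagonal blocks $(1-\lambda_i)\Omega$ and $\Omega(I-\mathcal{P}_i)$. Its decay rate is $\max\{1-\lambda_i,\rho(\Omega(I-\mathcal{P}_i))\}$. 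This also shows that your ``stale delivery counts as no reset'' chain is exact, not merely an upper bound.

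Since $\Omega(I-\lambda_i\mathcal{P}_i)$ dominates both $(1-\lambda_i)\Omega$ and $\Omega(I-\mathcal{P}_i)$ entrywise, this rate is at most, and generically strictly below, $\rho(\Omega(I-\lambda_i\mathcal{P}_i))$. Your upper-bound half therefore survives and gives sufficiency of \eqref{eq21}, modulo the $M$-blocking issue, which you rightly flag and the paper never resolves. The claimed matching lower bound is false, however, and so is necessity itself. Take $N=M=1$, $p_i^0=p_i^1=1/2$, $\lambda_i=0.9$, and scalar $A_i$ with $A_i^2=1.9$. Then $\rho(\Omega(I-\lambda_i\mathcal{P}_i))\rho^2(A_i)=0.55\cdot 1.9>1$, so \eqref{eq21} fails. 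Yet $\Delta_i^R$ grows by at most one per slot, so $\Delta_i^R(k)\ge\bar\Delta+n$ forces scheduling in each of the last $n$ slots. Hence $\Pr\{\Delta_i^R(k)\ge\bar\Delta+n\}\le\Pr\{G_{0.9}+G_{0.5}>n\}=O(0.5^n)$, where the independent geometric variables are the waits for an arrival and for a subsequent success. Since $f(n)=O(1.9^n)$, the time-average cost is finite; a small perturbation gives the same conclusion with $p_i^0\neq p_i^1$. What a corrected version of your sandwich characterizes in the always-scheduled regime is $\max\{1-\lambda_i,\rho(\Omega(I-\mathcal{P}_i))\}\,\rho^2(A_i)<1$, not \eqref{eq21}.
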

\begin{proof}
	See Appendix \ref{Proof Theorem 4}.
\end{proof}

\begin{remark}
	According to the form of \eqref{eq21}, we observe that monitoring performance can only be maintained if transmission channels are sufficiently reliable to match the rate at which data arrive, providing the guidance for the design of wireless monitoring networks. Using the same method, a corollary is further derived for the AoI function defined in Example \ref{exam2}. Moreover, this result can be extended to scenarios where data arrivals follow a Markov process.
\end{remark}

\begin{corollary} \label{cor1}
	For the AoI function defined in \eqref{3.9}, the time-average cost $\mathcal{C}(s_0,\pi)$ is bounded under the proposed scheduling policy $\hat \pi^*$ if and only if the following condition is satisfied:
	\begin{equation} \label{4.26}
		\rho (\Omega (I - \lambda_i \mathcal{P}_i)) < \frac{1}{e^r}, \forall i \in \mathcal{N}.
	\end{equation}
\end{corollary}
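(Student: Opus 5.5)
The plan mirrors the argument we would give for Theorem \ref{thm4}, replacing the matrix growth rate $\rho^2(A_i)$ by the scalar growth rate $e^r$ of \eqref{3.9}. First, by the independence of the arrival processes and the product form \eqref{eq11}, the average cost $\mathcal{C}(s_0,\pi)$ is finite if and only if $\limsup_{T\to\infty}\frac{1}{T}\sum_{k}\mathbb{E}[e^{r\Delta_i^R(k)}]<\infty$ for every $i\in\mathcal{N}$. So it suffices to characterise, for each sensor separately, when $\mathbb{E}[e^{r\Delta_i^R}]$ stays bounded under the threshold policy of Theorem \ref{the3}.

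Since the policy has a threshold structure in $\Delta_i^R$, once $\Delta_i^R$ exceeds $\max\{{\Delta_{0,i}^R}^*,{\Delta_{1,i}^R}^*\}$ sensor $i$ is always scheduled. The tail of the distribution of $\Delta_i^R$ is therefore governed by the event that, in consecutive scheduled slots, no fresh delivery resets the AoRI to a small value.

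A reset to a value bounded independently of the tail happens when a packet arrives in the buffer and is then delivered. A lost delivery occurs with probability $1-p_i^{\vartheta}$, and a fresh delivery, to first order, with probability $\lambda_i p_i^{\vartheta}$. So the probability that the AoRI keeps growing for $n$ steps while tracking the channel is
\begin{equation*}
\Pr\{\Delta_i^R\ge n\}\asymp \mathbf{1}^T\bigl(\Omega(I-\lambda_i\mathcal{P}_i)\bigr)^n\mathbf{1}.
\end{equation*}
This holds up to constants and reflects the joint Markov evolution of $\vartheta$ and the ``no useful reset'' event; here $(I-\lambda_i\mathcal{P}_i)$ is the diagonal survival factor and $\Omega$ propagates the channel state. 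I would make this rigorous by defining the vector $g_n=(\Pr\{\Delta_i^R=n+c,\vartheta=j\})_{j\in\{0,1\}}$ above the threshold. I would then show the recursion $g_{n+1}=\Omega^T(I-\lambda_i\mathcal{P}_i)g_n$ up to bounded boundary terms. Delivering an old buffered packet does not reset the AoRI to a bounded value, which is why only $\lambda_i p_i^\vartheta$ appears. The spectral radius is the same whether the factors are written as $\Omega(I-\lambda_i\mathcal{P}_i)$ or as its transpose, so the order of the factors does not matter.

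Then $\mathbb{E}[e^{r\Delta_i^R}]\asymp\sum_n e^{rn}\,\|(\Omega(I-\lambda_i\mathcal{P}_i))^n\|$. By Gelfand's formula and the positivity of the matrix (all $\kappa\in(0,1)$, so Perron–Frobenius applies), this series converges if and only if $e^r\rho(\Omega(I-\lambda_i\mathcal{P}_i))<1$, which is \eqref{4.26}.

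\textbf{Sufficiency.} Use the upper bound on $g_n$ and a geometric sum.

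\textbf{Necessity.} Any policy schedules sensor $i$ at most every slot, so the always-schedule recursion is a lower bound on the survival probability. The Perron eigenvector gives $\|g_n\|\ge c\,\rho^n$, and the sum then diverges.

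\textbf{Main obstacle.} The main difficulty is the coupling between AoLI and AoRI. When an old packet is delivered, $\Delta_i^R$ drops to $\Delta_i^L+1$, which may itself be large. I would handle this by tracking the pair and bounding $\Delta_i^R$ above by the time since the last (arrival, then delivery) completion. Geometric tails of the AoLI, with $e^r(1-\lambda_i)\le e^r\rho(\cdot)$ forced by the same condition, keep the extra term harmless. For the lower bound, the single-step survival event already suffices.
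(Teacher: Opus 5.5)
\textbf{Verdict.} Your plan follows the paper's own route, and it has a genuine gap that also makes the ``only if'' half false. The paper obtains this corollary by rerunning the proof of Theorem~\ref{thm4} with $e^{rj}-1$ in place of $h^{j}(\bar P_i)$. That proof uses the above-threshold recursion $\boldsymbol{\xi}_{s_i'}=\boldsymbol{\xi}_{s_i}\Omega(I-\lambda_i\mathcal{P}_i)$, a Perron eigenvector and a geometric series.

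\textbf{The faulty step.} Both arguments rest on the claim that, above the threshold, $\Delta_i^R$ keeps growing in a slot with probability $1-\lambda_i p_i^{\vartheta}$. The model does not say this. By \eqref{13} and case~(1) of \eqref{16}, a success in slot $k$ sets $\Delta_i^R(k)=\Delta_i^L(k-1)+1$, so a packet arriving in that same slot is not the one delivered. A scheduled sensor's AoRI therefore grows by one with probability $1-p_i^{\vartheta}$ when $\Delta_i^L<\Delta_i^R$, and with probability $1$ when $\Delta_i^L=\Delta_i^R$. It does not grow with probability $1-\lambda_ip_i^{\vartheta}$. You correctly say that a reset needs an arrival \emph{followed by} a delivery. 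You then price it per slot as arrival and delivery in the \emph{same} slot, so $g_{n+1}=\Omega^T(I-\lambda_i\mathcal{P}_i)g_n$ fails even up to boundary terms.

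\textbf{The true tail.} When sensor $i$ is always scheduled, $\Delta_i^R(k)=1+(k-m^*)+\Delta_i^L(m^*-1)$, where $m^*$ is the last successful slot. The two random summands are independent. The first has tail rate $\rho(\Omega(I-\mathcal{P}_i))$ (a failure run on the Markov channel), and the second has tail rate $1-\lambda_i$. Hence the true decay rate is $\max\{1-\lambda_i,\rho(\Omega(I-\mathcal{P}_i))\}$. By Perron--Frobenius monotonicity this is at most, and generically strictly below, $\rho(\Omega(I-\lambda_i\mathcal{P}_i))$.

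\textbf{Necessity fails.} Your lower bound (``the single-step survival event already suffices'') is exactly where the argument breaks. An arrival in slot $j$ followed by a success without arrival in slot $j+1$ never has arrival and success in one slot, yet gives $\Delta_i^R(j+1)=1$. For a concrete counterexample, take $N=M=1$ and $\lambda_1=p_1^0=p_1^1=\tfrac12$. Then $\Omega(I-\lambda_1\mathcal{P}_1)=\tfrac34\Omega$ has spectral radius $\tfrac34$, so \eqref{4.26} demands $e^r<\tfrac43$. Under always-scheduling, $\Delta_1^R-1$ is the sum of two independent geometric variables with $\Pr\{G\ge n\}=2^{-n}$. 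A window argument gives the same $O(n2^{-n})$ tail for any policy that schedules the sensor once its AoRI exceeds a fixed threshold. So the cost is finite for every $e^r<2$, for example $r=\ln\tfrac32$, even though \eqref{4.26} fails.

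The paper's proof of Theorem~\ref{thm4} makes the same error when it assigns the above-threshold transition probability $\kappa_{\vartheta\vartheta'}(1-\lambda_ip_i^{\vartheta'})$. Following the paper more closely will therefore not close the gap.

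\textbf{What can be saved.} Sufficiency can be rescued with your own ``main obstacle'' device: bound $\Delta_i^R$ by the time since the last arrival-then-delivery pair. The resulting rate is dominated by $\rho(\Omega(I-\lambda_i\mathcal{P}_i))$, so \eqref{4.26} suffices. Even then you must justify that sensor $i$ is scheduled whenever $\Delta_i^R$ exceeds a fixed level. That is not automatic when $M<N$. The thresholds of Theorem~\ref{the3} are defined with the other sensors' states held fixed, and several sensors can exceed theirs at the same time.
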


In the following, we consider a more complex case where the data arrival process evolves under a Markov chain. Assume that the probability of no data arrival at the current time slot is $\tilde \lambda_i$ when no data arrived in the previous time slot, i.e., $\Pr \{\gamma_i^L (k) = 0 | \gamma_i^L (k - 1) = 0\} = \tilde \lambda_i$. When data arrived in the previous time slot, the probability of data arrival at the current time slot is set as $\bar \lambda_i$, i.e., $\Pr \{\gamma_i^L (k) = 1 | \gamma_i^L (k - 1) = 1\} = \bar \lambda_i$. We can get a new corollary of Theorem \ref{thm4}. 

\begin{corollary}
	For the AoI function defined in \eqref{3.8}, when the data arrival is a Markov process with transition matrix $\begin{bmatrix} \tilde \lambda_i & 1 - \tilde \lambda_i\\ 1 - \bar \lambda_i & \bar \lambda_i \\ \end{bmatrix}$, the time-average cost $\mathcal{C}(s_0,\pi)$ is bounded under the proposed scheduling policy $\hat \pi^*$ if and only if the following condition is satisfied:
	\begin{align}
		\rho (\Omega (I -  \hat \lambda_i \mathcal{P}_i)) < \frac{1}{\rho^2 (A_i)}, \forall i \in \mathcal{N},
	\end{align}
	where $\hat \lambda_i = \min \{1 - \tilde \lambda_i, \bar \lambda_i\}$.
\end{corollary}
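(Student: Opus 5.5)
The plan is to reuse the proof of Theorem \ref{thm4} almost verbatim. The only change is the one step where the i.i.d. arrival probability $\lambda_i$ appears. That step will be replaced by a per-slot bound on the arrival probability that holds uniformly over the previous arrival state. Concretely, I would augment the per-sensor chain with the arrival indicator $\gamma_i^L(k-1)\in\{0,1\}$. Conditioned on the previous arrival state, the probability that a fresh packet is in the buffer at slot $k$ is either $1-\tilde\lambda_i$ (previous slot had no arrival) or $\bar\lambda_i$ (previous slot had an arrival).

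The decisive event for $\Delta_i^R$ in the threshold-regime analysis of Theorem \ref{thm4} is a slot with both a fresh arrival and a successful transmission. By Theorem \ref{the3}, once $\Delta_i^R$ exceeds the channel-dependent thresholds the sensor is scheduled; only in that large-$\Delta_i^R$ regime does the tail matter. There, $\Delta_i^R$ resets to a bounded value (of order $\Delta_i^L+1$) with probability at least $\hat\lambda_i p_i^{\vartheta}$ under channel state $\vartheta$, whatever the arrival history. Otherwise it grows by one.

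For sufficiency, I would write the tail probability $\Pr\{\Delta_i^R(k)\ge n\}$ jointly with the channel state as the vector $v_n=\Pr\{\Delta_i^R\ge n,\vartheta=\cdot\}$. This gives the recursion $v_{n+1}\preceq (\Omega(I-\hat\lambda_i\mathcal{P}_i))^T v_n$ up to the ordering convention of the paper. Iterating yields $\Pr\{\Delta_i^R\ge n\}\le c\,\rho(\Omega(I-\hat\lambda_i\mathcal{P}_i))^n$ up to polynomial factors. The cost \eqref{3.8} grows like $\rho^{2\Delta}(A_i)$, again up to polynomial factors from Jordan blocks. So the expected cost is a series dominated by $\sum_n (\rho^2(A_i)\rho(\Omega(I-\hat\lambda_i\mathcal{P}_i)))^n$, which converges under the stated condition. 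The multi-sensor coupling through constraint \eqref{3} is handled exactly as in Theorem \ref{thm4}, since the condition is stated per sensor.

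For necessity, I would lower-bound the tail by an adversarial arrival path. Which branch is relevant depends on whether $1-\tilde\lambda_i$ or $\bar\lambda_i$ is smaller. I would exhibit a positive-probability regime in which the arrival chain sits in the state attaining $\hat\lambda_i$: for $1-\tilde\lambda_i$, a run of no-arrival slots; for $\bar\lambda_i$, repeated arrivals. Along it, the per-step reset probability equals $\hat\lambda_i p_i^\vartheta$, so the reverse inequality holds asymptotically and divergence follows when the condition fails.

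The main obstacle is this necessity direction. Arrivals are correlated with $\Delta_i^R$ growth, so the matrix $\Omega(I-\hat\lambda_i\mathcal{P}_i)$ is really a bound, not an exact kernel, for the product chain (channel, arrival). The exact tail rate should be the spectral radius of a $4\times 4$ matrix $(\Omega\otimes\Lambda_i)$ with the success entries removed, where $\Lambda_i$ is the arrival transition matrix. I would first try to show that its dominant decay rate is governed by the minimizing arrival state, i.e.\ that it matches $\rho(\Omega(I-\hat\lambda_i\mathcal{P}_i))$. If it does not match exactly, I would state necessity through that $4\times4$ radius and note that $\hat\lambda_i$ gives the tight sufficient side.
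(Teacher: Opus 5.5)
Your sufficiency half is sound within the framework of the proof of Theorem \ref{thm4} that you reuse, and it is the paper's argument in another form. The paper splits on the previous arrival state and propagates the tail with either $\Omega(I-(1-\tilde\lambda_i)\mathcal{P}_i)$ or $\Omega(I-\bar\lambda_i\mathcal{P}_i)$. It then demands both spectral-radius conditions. Since the Perron root is nonincreasing in $\lambda$, that is exactly your single condition with $\hat\lambda_i$. Your channel-marginal recursion is legitimate: after summing out the arrival state $\gamma$, every survival factor is $1-q_\gamma p_i^{\vartheta'}\le 1-\hat\lambda_i p_i^{\vartheta'}$, where $q_0=1-\tilde\lambda_i$ and $q_1=\bar\lambda_i$.

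The gap is the necessity step. Your adversarial path confuses the conditional reset probability in an arrival state with the probability of \emph{staying} in that state without a reset. A run of no-arrival slots has probability $\tilde\lambda_i^n$, so it survives at rate $\tilde\lambda_i$, not $1-(1-\tilde\lambda_i)p_i^{\vartheta}$. A run of arrivals survives at rate about $\bar\lambda_i(1-p_i^{\vartheta})$, not $1-\bar\lambda_i p_i^{\vartheta}$. The factor $1-\hat\lambda_i p_i^{\vartheta'}$ is realized only by transitions that leave the minimizing state. As you suspected, within this reset model the exact rate is $\rho(K_i)$, with $[K_i]_{(\vartheta,\gamma),(\vartheta',\gamma')}=\kappa_{\vartheta\vartheta'}[\Lambda_i]_{\gamma\gamma'}(1-\gamma' p_i^{\vartheta'})$ and $\Lambda_i$ the arrival transition matrix. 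Let $u$ be the right Perron vector of $\Omega(I-\hat\lambda_i\mathcal{P}_i)$ and set $w_{(\vartheta,\gamma)}=u_\vartheta$. Then $K_i w\preceq\rho(\Omega(I-\hat\lambda_i\mathcal{P}_i))\,w$, with strict inequality in the rows where $q_\gamma>\hat\lambda_i$. For $0<p_i^\vartheta<1$ the matrix $K_i$ is irreducible, so this gives $\rho(K_i)<\rho(\Omega(I-\hat\lambda_i\mathcal{P}_i))$ whenever $\bar\lambda_i\neq 1-\tilde\lambda_i$.

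This gap cannot be closed, because the ``only if'' is false. Take $N=M=1$, $p_i^0=p_i^1=1$, $\tilde\lambda_i=0.5$, $\bar\lambda_i=0.1$ and scalar $A_i=1.2$. Then $\hat\lambda_i=0.1$ and the stated condition reads $0.9<1/1.44$, which fails. Yet above the threshold the sensor is scheduled and $\Delta_i^R(k)=\Delta_i^L(k-1)+1$, so the AoRI tail is that of a no-arrival run and decays like $0.5^n$. In this example the arrival state is never occupied by a surviving path. Since $\text{Tr}(h^n(\bar P_i))$ grows like $1.44^n$ and $0.5\cdot 1.44<1$, the time-average cost is bounded. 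By continuity the same holds for $p_i^\vartheta$ slightly below $1$.

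The paper's proof has exactly the flaw you flagged. In each branch it freezes the previous arrival state over the whole tail, raising a single branch matrix to the power $\tau$, although the arrival chain keeps moving. It then declares both branch conditions necessary. So your proposal establishes the ``if'' direction, and your fallback $\rho(K_i)<1/\rho^2(A_i)$ for all $i$ is the correct necessary-and-sufficient condition in this model. In general the $\hat\lambda_i$ condition is only sufficient. It is also necessary when arrivals are i.i.d. ($\bar\lambda_i=1-\tilde\lambda_i$), where it reduces to Theorem \ref{thm4}.
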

\begin{proof}
	See Appendix \ref{Proof Corollary 2}.
\end{proof}

\begin{figure*}[!t]
	\renewcommand{\figurename}{Fig.}
	\centering
	
	\subfloat[$\lambda_1 = 0.9$, $\lambda_2 = 0.5$, $\vartheta = 0$]{%
		\includegraphics[width=0.3\textwidth]{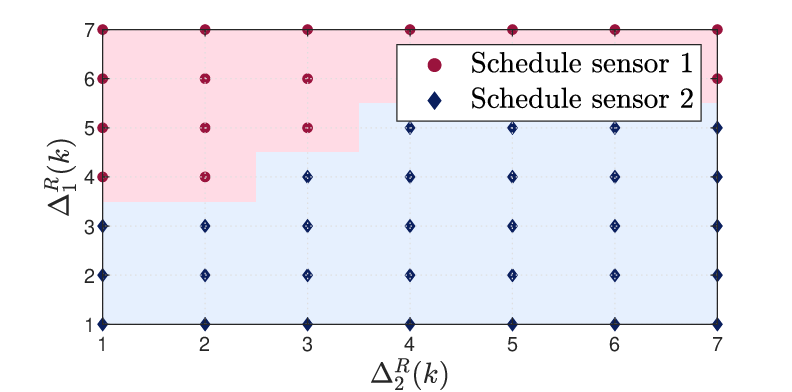}%
		\label{fig4.1}%
	}
	\subfloat[$\lambda_1 = 0.9$, $\lambda_2 = 0.5$, $\vartheta = 1$]{%
		\includegraphics[width=0.3\textwidth]{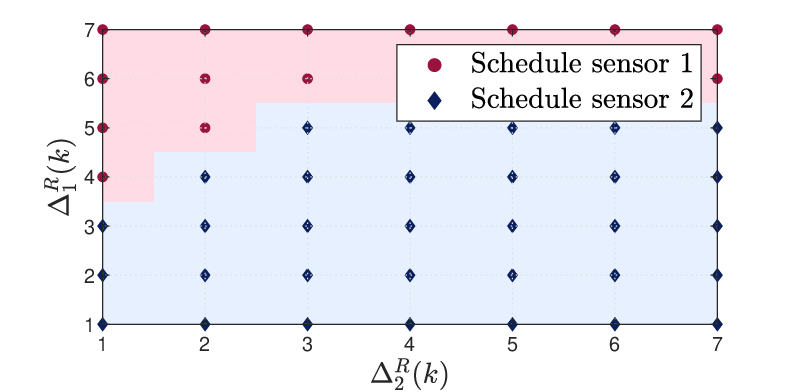}%
		\label{fig4.2}%
	}
	\subfloat[$\lambda_1 = \lambda_2 = 0.9$, $\vartheta = 0$]{%
		\includegraphics[width=0.3\textwidth]{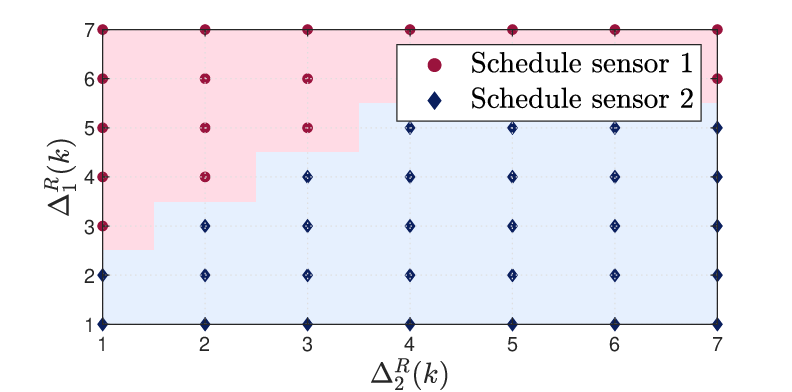}%
		\label{fig4.3}%
	}
	
	\subfloat[$\lambda_1 = 0.9$, $\lambda_2 = 0.5$]{%
		\includegraphics[width=0.3\textwidth]{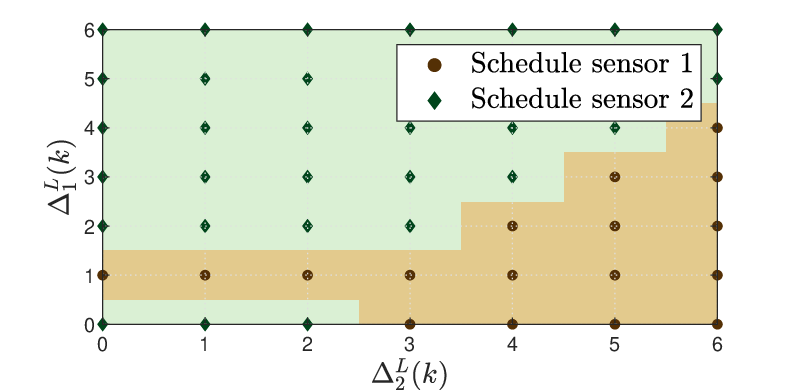}%
		\label{fig4.4}%
	}
	\subfloat[$\lambda_i = 0.9$]{%
		\includegraphics[width=0.3\textwidth]{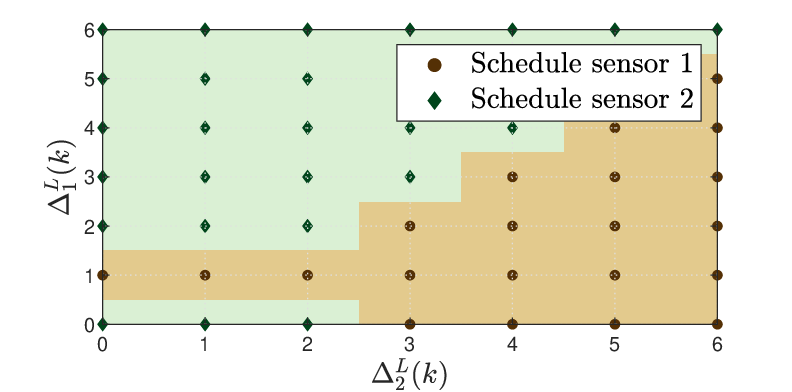}%
		\label{fig4.5}%
	}
	\caption{Scheduling structure of SISP in the two sensor system under different cases.}
	\label{fig4}
\end{figure*}

\begin{figure}[t]
	\renewcommand{\figurename}{Fig.}
	\centering
	\subfloat[$\lambda_1 = \lambda_2 = 0.9$.\label{fig5.1}]{\includegraphics[width=0.35\textwidth]{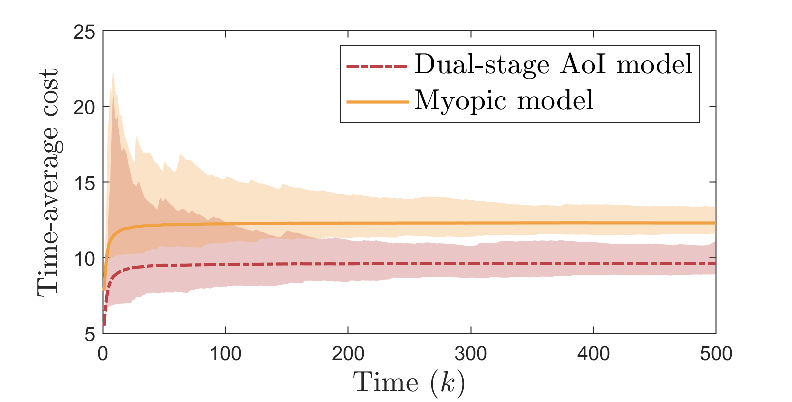}} 
	
	\subfloat[$\lambda_1 = 0.9, \lambda_2 = 0.5$.\label{fig5.2}]{\includegraphics[width=0.35\textwidth]{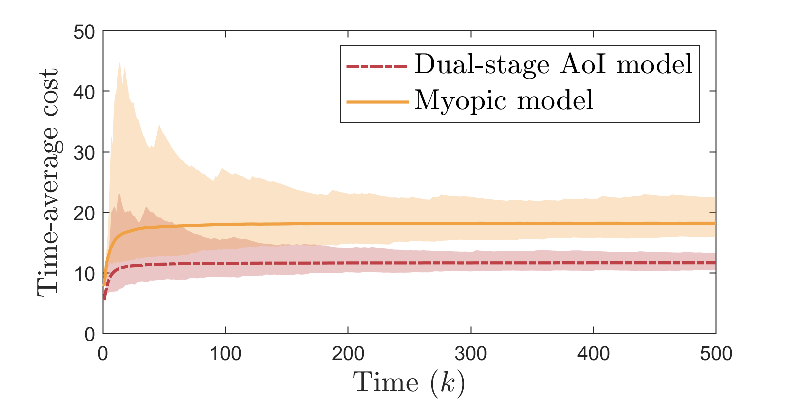}}
	\caption{Comparison of the time-average cost for myopic model and dual-stage AoI model.}
	\label{fig5}
\end{figure}

\begin{figure}[t]
	\renewcommand{\figurename}{Fig.}
	\centering
	\subfloat[$\lambda_1 = \lambda_2 = 0.9$.\label{fig6.1}]{\includegraphics[width=0.35\textwidth]{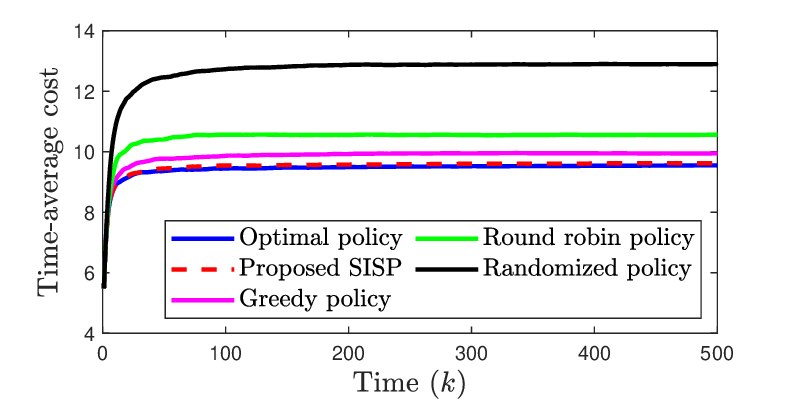}} 
	
	\subfloat[$\lambda_1 = 0.9, \lambda_2 = 0.5$.\label{fig6.2}]{\includegraphics[width=0.35\textwidth]{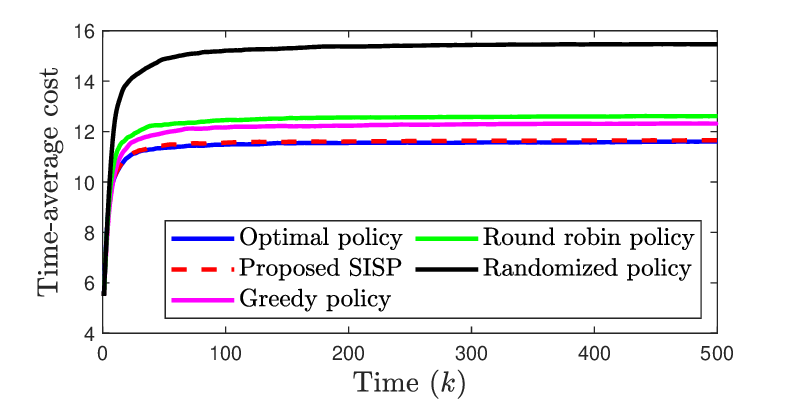}}
	\caption{Performance comparison of different scheduling policies for AoI function in Example 2.}
	\label{fig6}
\end{figure}

\begin{figure}
	\renewcommand{\figurename}{Fig.}
	\centering
	\subfloat[$\lambda_1 = \lambda_2 =  0.8$.\label{fig7.1}]{\includegraphics[width=0.35\textwidth]{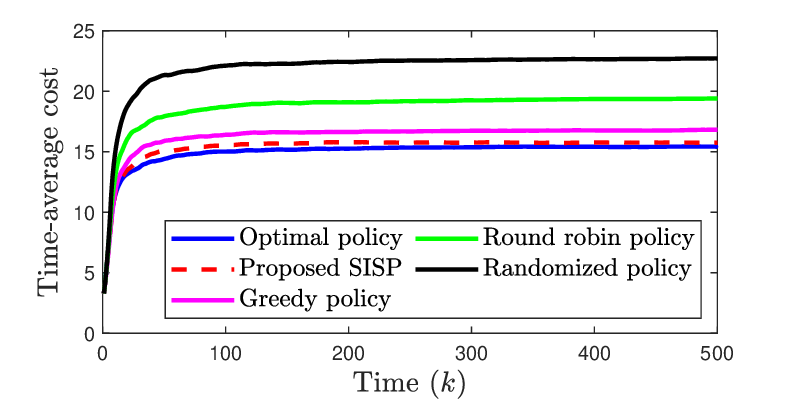}} 
	
	\subfloat[$\lambda_1 = \lambda_2 =  0.6$.\label{fig7.2}]{\includegraphics[width=0.35\textwidth]{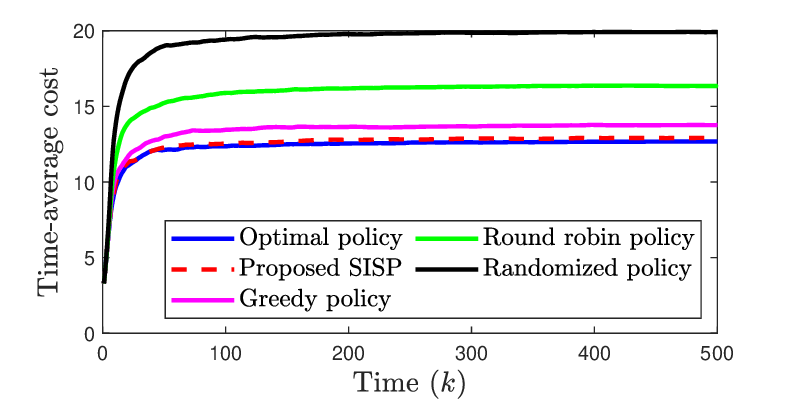}}
	\caption{Performance comparison of different scheduling policies for AoI function in Example 1.}
	\label{fig7}
\end{figure}

\section{Simulation Study}
To demonstrate the performance of the proposed scheduling strategy, we conduct simulation experiments using the AoI functions provided in Examples \ref{exam2} and \ref{exam1}. First, in the two-sensor model, we verify the effectiveness of the two-stage AoI mechanism and analyze the scheduling structure. Then, in the more complex multi-sensor model, we further validate the superiority of Algorithm \ref{alg1} through comparative experiments.

\subsection{Demonstration of State Evolution and Scheduling Structure}
To provide more general results, we adopt the AoI function \eqref{3.8} from Example 2 as the model for the wireless monitoring networks.The system parameters are considered as follows:
\begin{align*}
	A_i = \begin{bmatrix} 1.1 & 0.5\\ 0 & 0.2 \\ \end{bmatrix}, C_i = \begin{bmatrix} 1 & 1 \end{bmatrix}, \Sigma_i^\omega = \begin{bmatrix} 1 & 0\\ 0 & 1 \\ \end{bmatrix},
\end{align*}
and the covariance of measurement noise is $0.8$. Then, the steady-state error covariance $\bar P_i$ is computed as 
\begin{align*}
	\bar P_i = \begin{bmatrix} 0.9038 & -0.5175\\ -0.5175 & 0.7464 \\ \end{bmatrix}.
\end{align*}
For the two-state memory-based channel model, we suppose the successful transmission probability as $p_i^0 = 0.5$ for bad state and $p_i^1 = 1$ for good state, respectively. The transition probabilities as set as $\kappa_{00} = 0.5, \kappa_{01} = 0.5, \kappa_{10} = 0.2, \kappa_{11} = 0.8$, reflecting a fading channel scenario typical in industrial mobility environments. In order to avoid computational complexity growing indefinitely, the state space truncation is given by $\bar \Delta_i^R = 7$.  According with the convergence condition in Theorem \ref{thm4}, we first assume that random data arrival rate $\varepsilon_i = 0.9$ for both sensors. The evolution of AoI function $f(\Delta_i^R(k))$ is plotted in Fig. \ref{fig3}, where the evolution $\Delta_i^R(k)$ and $\Delta_i^L(k)$ are also given. 

In order to reveal the dynamic properties of different arrival rates, we define $\lambda_1 = \lambda_2 = 0.9$ in Fig. \ref{fig3.1} and \ref{fig3.2}, and $\lambda_2 = 0.5$ in Fig. \ref{fig3.3}, which obviously meets the convergence condition presented in Theorem \ref{thm4}. From Fig. \ref{fig3.1} and \ref{fig3.2} we can find that the update of AoRI depends on AoLI and its value is updated only when the scheduling is successful, which corresponds to evolution law in Fig. \ref{fig2}. Moreover, the tendency of $f(\Delta_i^R(k))$ is directly related to AoRI and gradually converges to be stable. We then investigate the case where the arrival rate of monitoring data in sensor 2 is $\lambda_2 = 0.5$. It is easy to observe Fig. \ref{fig3.3} that the update rate of AoLI is significantly reduced, the value of AoRI does not increase notably due to the proposed scheduling policy.

Next, we continue to use the previously defined parameters to investigate the scheduling structure. In Fig. \ref{fig4}, we give scheduling results of SISP under different settings of $\lambda_i$. As shown in Fig. \ref{fig4.1} and \ref{fig4.2}, the scheduling results exhibit a threshold structure w.r.t. AoRI under both $\vartheta = 0$ and $\vartheta = 1$, verifying the proposed Theorem \ref{the3}, where AoLI is fixed as $\Delta_1^L(k) = 7, \Delta_2^L(k) = 6$. For sensor $i \in \{1,2\}$, the larger the value of $\Delta_i^R(k)$, the more likely the sensor $i$ is to be scheduled. Besides, by comparing Fig. \ref{fig4.1} and Fig. \ref{fig4.2}, the frequency of scheduling sensor 2 is higher under good channel state because the lower $\lambda_2$ makes updates to $\Delta_i^R(k)$ more urgent for maintaining freshness. Similarly, when $\lambda_2$ is increased to $\lambda_1 = 0.9$, as shown in Fig. \ref{fig4.3}, the frequency of scheduling sensor 2 decreases. Fig. \ref{fig4.4} and \ref{fig4.5} illustrate the scheduling decisions under different AoLI values, with AoRI fixed at $\Delta_1^R(k) = 2, \Delta_2^R(k) = 3$, which clearly shows that there exists no threshold structure w.r.t. AoLI. The results in Fig. \ref{fig4.4} and \ref{fig4.5} also confirm that sensor 2 is easier to be scheduled under a relatively lower data arrival rate. In order to display the superiority of the proposed dual-stage AoI model under random data arrivals, we compare the time-average cost with the traditional single-stage AoI model, which neglects the AoLI and is thus called myopic model. Fig. \ref{fig5} gives the contrast results, it is evident that dual-stage AoI model outperforms myopic model in terms of monitoring performance and stability.  

\begin{figure*}[!t]
	\renewcommand{\figurename}{Fig.}
	\centering
	\subfloat[$\lambda_1 = \lambda_2 = \lambda_3$, $M = 1$. \label{fig8.1}]{\includegraphics[width=0.35\textwidth]{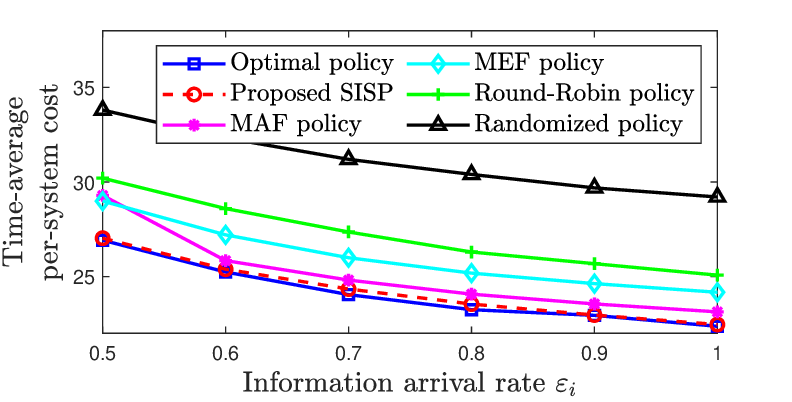}} 
	\subfloat[$\lambda_2 = \lambda_3 = 0.5$, $M = 1$.\label{fig8.2}]{\includegraphics[width=0.35\textwidth]{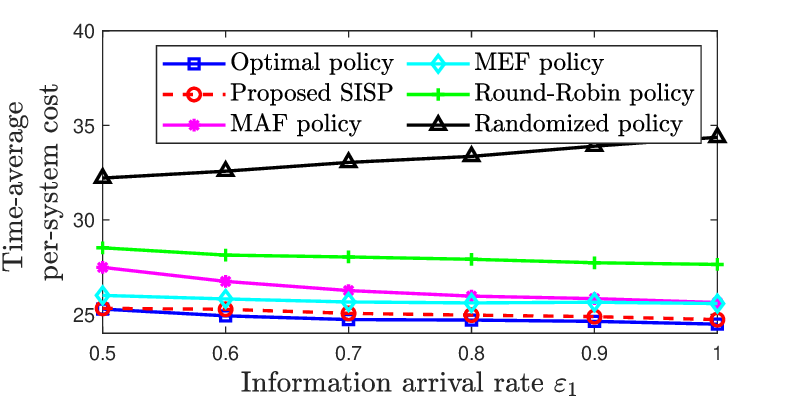}}
	
	\subfloat[$\lambda_1 = \lambda_2 = \lambda_3$, $M = 2$.\label{fig8.3}]{\includegraphics[width=0.35\textwidth]{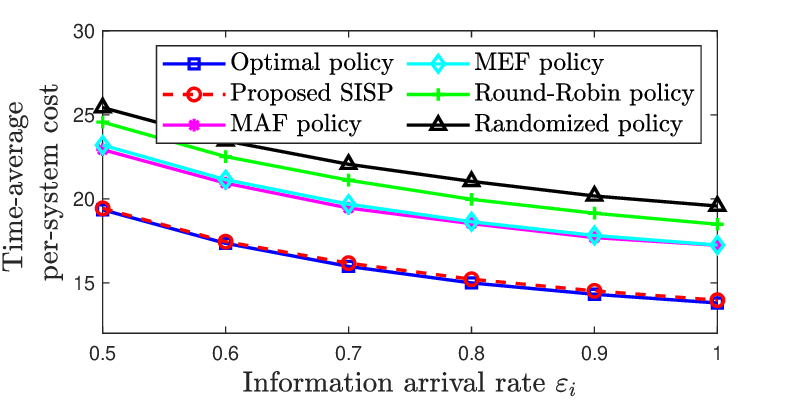}}
	\subfloat[$\lambda_2 = \lambda_3 = 0.5$, $M = 2$.\label{fig8.4}]{\includegraphics[width=0.35\textwidth]{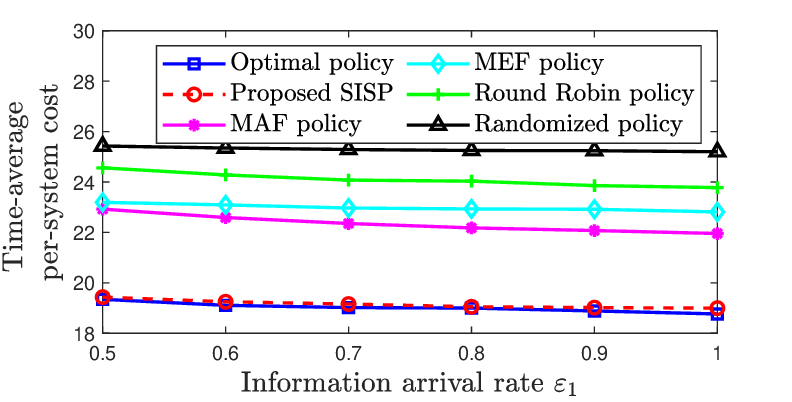}}
	\caption{Performance comparison of different scheduling policies in heterogeneous systems.}
	\label{fig8}
\end{figure*}

\begin{figure*}
	\renewcommand{\figurename}{Fig.}
	\centering
	\subfloat[case 1.\label{fig9.1}]{\includegraphics[width=0.25\textwidth]{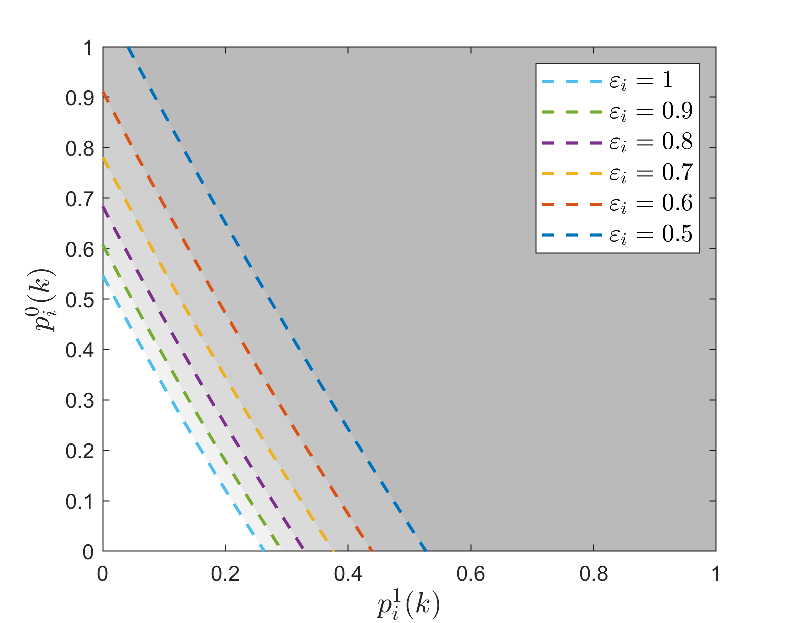}} 
	\subfloat[case 2.\label{fig9.2}]{\includegraphics[width=0.25\textwidth]{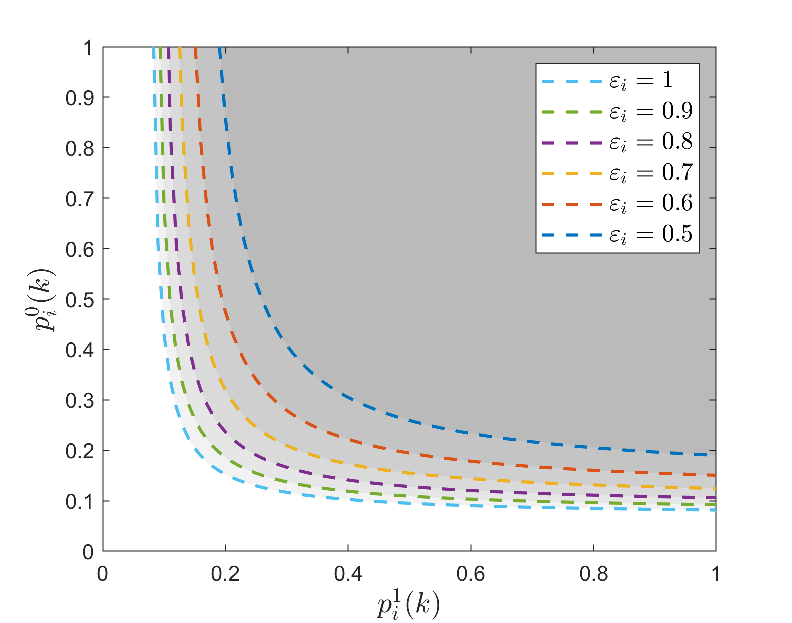}}
	\subfloat[case 3.\label{fig9.3}]{\includegraphics[width=0.25\textwidth]{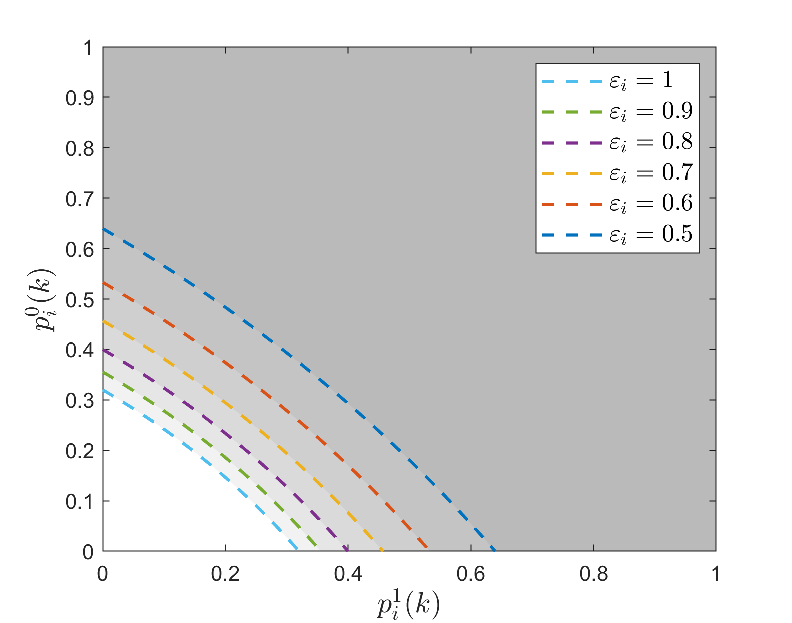}}
	\subfloat[case 4.\label{fig9.4}]{\includegraphics[width=0.25\textwidth]{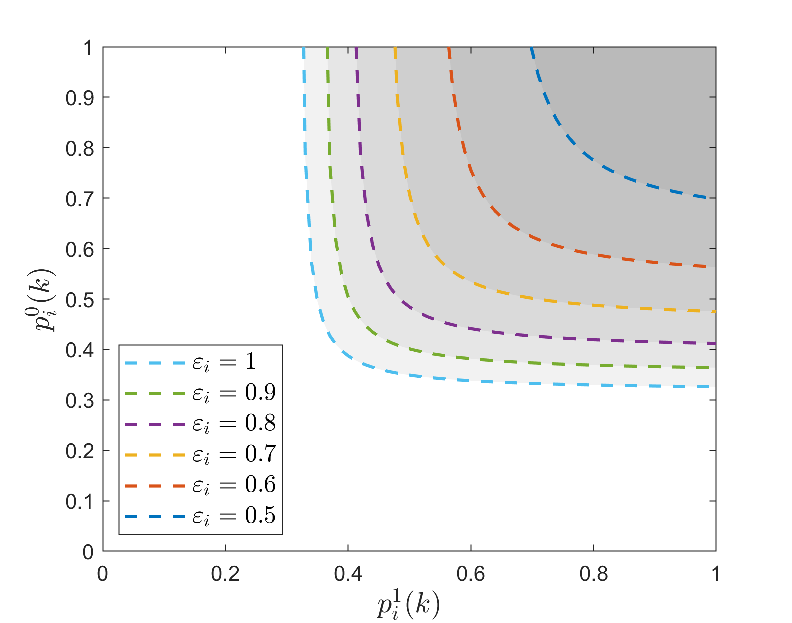}}
	
	\caption{Feasible region of transmission probabilities for two states under different random arrival rates.}
	\label{fig9}
\end{figure*}

\subsection{Performance Comparison with Different Policies}
First, four common scheduling policies are introduced for the comparison. The maximum age first (MAF) policy \cite{zhu_aoi_2021}:  Select the sensor with the largest AoRI to transmit in each time slot. The maximum error first (MEF) policy \cite{han2017optimal}: Select the sensor with the largest error to transmit in each time slot, similar to the greedy policy. The Round-Robin policy \cite{wu_remote_2024}: Schedule sensors in turn each time based on the established sequence. The randomized policy: Schedule sensors randomly in certain probabilities, which are considered to be proportional to the information arrival rates of corresponding sensors, i.e., $p_i^R = \vartheta_i / \sum \nolimits_{j=1}^N \vartheta_j$. For homogeneous systems, the MAF and MEF policies yield identical performance and can be unified into the Greedy policy. In addition, the optimal policy is also considered to demonstrate the near-optimality of SISP. For the AoI function in Example \ref{exam1}, we follow parameter settings in the previous subsection and conduct 500 Monte Carlo simulations, and results are plotted in Fig. \ref{fig6}. It is obvious that SISP outperforms the four other methods significantly with near-optimal performance and the smaller value of $\vartheta_i$ explicitly leads to a greater performance gap. It should also be noted that the higher $\vartheta_i$ results in lower scheduling costs, which is intuitive, as more frequent updates lead to fresher information. Without loss of generality, we consider the AoI function of Example \ref{exam2}, where $r$ is defined as $0.5$ and other settings remain unchanged. Fig. \ref{fig7} shows the similar comparison results with $\varepsilon_i = 0.8$ and $0.6$. 

Then, we further examine scheduling performance in a network composed of three heterogeneous systems, where the first system model is the same as previously considered, while the second and third ones are modeled as two-link planar industrial robots with a sampling interval of $15ms$ \cite{leong_transmission_2019}. The corresponding system parameters are given as follows:
\begin{align*}
	& A_2 = A_3 = \begin{bmatrix} 1.0058 & 0.0150 & -0.0016 & 0\\ 0.7808 & 1.0058 & -0.2105 & -0.0016 \\ -0.1160 & 0.0000 & 1.0077 & 0.0150 \\ -0.7962 & -0.0060 & 1.0294 & 1.0077 \\ \end{bmatrix},\\
	& \Sigma_1^\omega = \Sigma_2^\omega = \sigma \sigma^T, \sigma = \begin{bmatrix} 0.003 & 1 & -0.005 & -2.15 \\ \end{bmatrix}, \\
	& C_1 = C_2 = \begin{bmatrix} 1 & 0 & 0 & 0 \\ 0 & 0 & 1 & 0 \\ \end{bmatrix}.
\end{align*}
Thus, the steady state of error covariance is computed as 
\begin{align*}
	\bar P_2 = \bar P_3 = \begin{bmatrix}
		0.0003 & 0.0077 & -0.0002 & -0.0148 \\
		0.0077 & 1.3150 & -0.0130 & -2.8174 \\
		-0.0002 & -0.0130 & 0.0007 & 0.0289 \\
		-0.0148 & -2.8174 & 0.0289 & 6.0613 \\
	\end{bmatrix}.
\end{align*}
The transmission probabilities of two channel states are supposed as $p_i^0 = 0.4, p_i^1 = 0.9$, and $\kappa_{00} = 0.4, \kappa_{01} = 0.6, \kappa_{10} = 0.3, \kappa_{11} = 0.7$. The influence of various data arrival rates and the number of scheduled sensors on monitoring performance is studied in different policies. All simulations in Fig. \ref{fig8} shows the time-average per-system cost that our proposed SISP get a good lead in the comparison, and it nearly achieves performance identical to that of the optimal policy. By comparing results of $m = 1$ and $m = 2$, the performance gap get smaller when more sensors are scheduled. Moreover, compared with the case where all sensors share the same data arrival rate, the performance varies more smoothly when only one sensor’s data arrival rate changes. Specially, when only one sensor is scheduled, the randomized policy even cause the cost to increase as the $\lambda_1$ increases, because the AoI function of sensor 1 contributes a small portion to the total cost. This further verifies the practicality of our scheduling policy.

To study the influence of data arrival rates to the feasible region of channel states when there exist a memory-based channel, we present several sets of results in Fig. \ref{fig9} illustrating the relationship between $\lambda_i$ and $p_i^\vartheta$. According to Theorem \ref{thm4}, the feasible value of $p_i^\vartheta$ is affected by $\rho(A_i)$. For case 1, we set $\rho(A_i) = 1.1$, and $\kappa_{00} = 0.4, \kappa_{01} = 0.6, \kappa_{10} = 0.3, \kappa_{11} = 0.7$. As shown in Fig. \ref{fig9.1}, the feasible region becomes smaller as the decrease of $\lambda_i$. Then, we adjust transition probabilities of channel states in case 2, that is $\kappa_{00} = 0.9, \kappa_{01} = 0.1, \kappa_{10} = 0.1, \kappa_{11} = 0.9$, and results are provided in Fig. \ref{fig9.2}. In this case, the channel tends to remain in its previous state, i.e., the channel memory becomes stronger, and thus the feasible region of $p_i^\vartheta$ value shrinks significantly. Utilizing Corollary \ref{cor1}, the AoI function \eqref{3.9} is also considered with $r = 0.5$ and it derives a smaller region due to $e^r > \rho^2(A_i)$, which is plotted in Fig. \ref{fig9.4}. On the contrary, the region becomes larger when transition probabilities are changed to case 3 with $\kappa_{00} = 0.2, \kappa_{01} = 0.8, \kappa_{10} = 0.8, \kappa_{11} = 0.2$, given in Fig. \ref{fig9.3}. It means that more frequent state switching will lead to more feasible value of $p_i^\vartheta$.  

\section{Conclusion}
In this paper, we addressed the transmission scheduling problem for optimizing the AoI function modeled based on monitoring performance, where monitoring data arrives randomly and channel state exhibits temporal correlation. We formulated a dual-AoI model to describe different evolution stages, and derived a low-complexity scheduling algorithm based on the structural properties of the MDP. Theoretical analysis revealed that the optimal policy follows a channel state-dependent threshold structure w.r.t. AoRI and established the necessary and sufficient condition for monitoring stability. Results of simulations demonstrate that the proposed policy significantly outperforms existing baselines. Specially, dual-stage AoI model is more effective compared with single-stage model for random data arrivals. Notably, we observe that the threshold structure applies only to AoRI, and sensors with higher arrival rates are prioritized. Furthermore, higher data arrival rates effectively expand the stability region, allowing the system to tolerate lower channel qualities. Future research will focus on extending our conclusions to multi-hop networks and incorporating learning-based methods to design intelligent scheduling methods for control aware communication. 

\appendices
\section{Proof of Theorem 1} \label{Proof Theorem 1}
To confirm the existence of deterministic stationary policy for the Problem \ref{prob2}, we should verify the conditions proposed in \cite[Theorem 2]{cavazos1992comparing}, listed as follows: 
\begin{enumerate}
	\item There exists a stationary policy $\pi$ inducing an irreducible positive recurrent Markov chain such that the average cost $\mathcal{C}(s_0,\pi)$ is bounded.
	\item For any positive number $r$, the set $\{s|c(s,\pi) < r\}$ is finite.
\end{enumerate}

According to the formulated optimization problem, the action space is finite and the one-step cost is bounded below by 0. Besides, if we suppose that there exists a policy $\pi$ with the bounded average cost, such that 
\begin{equation}
	\lim \limits_{T \to \infty} \frac{1}{T} \sum \limits_{k = 1}^T \mathbb{E}[c(s(k),d(k))] < \infty.
\end{equation}
Therefore, the condition 1) holds. The AoI function $f(\Delta_i^R(k))$ considered in this paper is nondecreasing with $\Delta_i^R$, thus the number of $s$ is finite for $c(s,\pi) < r$ when $r > 0$, and the condition 2) is verified. Then, Theorem \ref{the1} is proved.

\section{Proof of Lemma 1} \label{Proof lemma 1}
By iterating the action-value function, $Q (s)$ of the $(n+1)$-th iteration can be described as below for any $s$:
\begin{equation}
	Q_{n+1}(s) = \min \limits_{d \in \mathcal{A}}  \Theta_{n+1}(s,d) - \min \limits_{d \in \mathcal{A}}  \Theta_{n+1}(s_r,d) \label{36}
\end{equation}
where $s_r$ is a definite state. For any initial value of $Q_1(s)$, we have $\lim \limits_{n \to \infty} Q_n(s) = Q(s)$. The optimal policy of $n$th iteration is further denoted as $\pi_n^*(s) = \arg \min \nolimits_{d \in \mathcal{A}} \Theta_{n+1}(s,d)$ from \eqref{eq27}, where $\pi_n^*(s) = (\pi_{i,n}^*(s))_{i\in \mathcal{N}}$. Then, we use the  mathematical induction to verify Lemma \ref{lemma2}.

Given the initialized $Q_1(s)$, assume $Q_1(s^b) \geq V_1(s^a)$ holds, as well as $Q_n(s^b) \geq Q_n(s^a)$. For $(n+1)$-th iteration, utilizing the form in \eqref{36} can derive that for $s^a$ there exists
\begin{equation}
	Q_{n+1}(s^a) = \Theta_{n+1}(s^a,\pi_n^*(s^a)) - \Theta_{n+1}(s_r,\pi_n^*(s_r)),
\end{equation}
which can infer the following inequality via the policy optimality
\begin{align}
	Q_{n+1}(s^a) & \leq \Theta_{n+1}(s^a,\pi_n^*(s^b)) - \Theta_{n+1}(s_r,\pi_n^*(s_r)) \nonumber \\
	& = c (s^a,\pi_n^*(s^b)) + \sum\limits_{{s^a}' \in \mathcal{S}} {\Pr \{{s^a}'| {s^a,\pi_n^*(s^b)}\}} Q_n({s^a}') \nonumber \\ 
	& \ \ \ - \Theta_{n+1}(s_r,\pi_n^*(s_r)).
\end{align}
Similarly, for state $s^b$ we can get $Q_{n+1}(s^b) =  c (s^b,\pi_n^*(s^b)) + \sum\nolimits_{{s^b}' \in \mathcal{S}} {\Pr \{{s^b}'| {s^b,\pi_n^*(s^b)}\}} Q_n({s^b}') - \Theta_{n+1}(s_r,\pi_n^*(s_r))$. Note that for each sensor of state $s^b$ there exist two possible actions: $\pi_{i,n}^*(s^b) = 0$ and $\pi_{i,n}^*(s^b) = 1$. Regardless of which case, both $\tilde\Delta_i^{L,b}  \geq  \tilde\Delta_i^{L,a}$ and $ \tilde \Delta_i^{R,b} \geq \tilde \Delta_i^{R,a}$ hold, leading us to conclude that 
\begin{align}
	& \sum\limits_{{s^a}' \in \mathcal{S}} {\Pr \{{s^a}'| {s^a,\pi_n^*(s^b)} \}} Q_n({s^a}') \nonumber \\
	\leq & \sum\limits_{{s^b}' \in \mathcal{S}} {\Pr \{{s^b}'| {s^b,\pi_n^*(s^b)} \}} Q_n({s^b}'),
\end{align}
which can deduce inequality $Q_{n+1}(s^a) \leq Q_{n+1}(s^b)$ for $n >1$ via induction. Furthermore, due to the convergence of $Q_n(s)$, we can prove the inference of Lemma \ref{lemma2}.

\section{Proof of Theorem 2} \label{Proof Theorem 2}
In order to prove the existence of distributed structure, we first introduce the randomized policy $\pi_R$. Under the randomized policy $\pi_R$, the $i$-th sensor is scheduled with a certain probability $p_i^R$, such that the Bellman equation similar to \eqref{eq26} is established as follows:
\begin{align} 
	\phi^\dag + Q^R(s) \! = \! {c^\dag (s,d) +  {\sum\limits_{s' \in \mathcal{S}} {\mathbb{E}^{\pi_R}[\Pr \{s'| {s,d} \}}] Q^R(s')}},  \label{42}
\end{align}
where $\phi^\dag$ is the cost value under policy $\pi_R$ and $Q^R(s)$ is the corresponding value function. Then, the decomposition property of $Q^R(s)$ will be proved. According to the result in Theorem \ref{the1}, the policy $\pi^R$ is an unichain policy, such that the solution of \eqref{42} also exists. Depending on the the relationship between the joint distribution and the marginal distribution, for any state $s \in \mathcal{S}$ we have 
\begin{align}
	& \sum \limits_{s' \in \mathcal{S}} \Pr \{s'\left| {s,d}\right.\} = \sum \limits_{s' \in \mathcal{S}} \Pr \{(s^\Delta)', \vartheta' | s^\Delta, \vartheta, d\} \nonumber \\
	= & \sum \limits_{s_i' \in \mathcal{S}_i} \Pr \{(s_i^\Delta)', \vartheta' | s^\Delta, \vartheta, d \} \nonumber \\
	= & \sum \limits_{s_i' \in \mathcal{S}_i} \Pr \{(s_i^\Delta)', \vartheta' | s_i^\Delta, \vartheta, d_i \},
\end{align}
where $s_i$ is defined as $(s_i^\Delta, \vartheta)$ and its corresponding state space is $\mathcal{S}_i$. Hence, it is obvious that $\sum \nolimits_{s' \in \mathcal{S}} \Pr \{s'\left| {s,d}\right.\} = \sum \nolimits_{s_i' \in \mathcal{S}_i} \Pr \{s'\left| {s_i, d_i}\right.\}$. Combined with $\phi^\dag = \sum \nolimits_{i \in \mathcal{N}} \phi_i^\dag$ and $Q^R(s) = \sum \nolimits_{i \in N} Q^R_i(s_i)$, there exists following Bellman equation for the $i$-th sensor:
\begin{align} \label{40}
	\phi_i^\dag + Q_i^R(s_i)  = & {c_i^\dag(s_i,d_i)} \nonumber \\
	& + {\sum\limits_{s_i' \in \mathcal{S}_i} {\mathbb{E}^{\pi_R}[\Pr \{s_i'\left| {s_i,d_i} \right.}\}] Q_i^R(s_i')},
\end{align}
where $\Pr \{s_i'\left| {s_i,d_i}\right.\}$ can be computed from \eqref{eq11}, $c_i^\dag(s_i,d_i) = \mathbb{E} [ c_i (s_i,d_i)] = f (\Delta_i^R)$, $\phi_i^\dag$ denotes the optimal cost of the $i$-th sensor and $Q_i^R(s_i)$ is the corresponding value function. Then, by the concept of approximate dynamic programming \cite{ying_cui_delay-aware_2012}, the value function given in \eqref{eq26} is approximated to $Q^R(s)$ that equal to $\sum \nolimits_{i \in \mathcal{N}} Q^R_i(s_i)$, and we can further obtain the form of structure-informed scheduling policy $\tilde \pi^*$. 

Next, we will prove that the long-term time-average cost under $\tilde \pi^*$ is lower than under randomized policy $\pi^R$. Applying the proof of \cite[Proposition 4.4.2]{bertsekas_2007}, we express the performance of various policies as the gain-bias form. As policy $\pi^R$ is an unichain policy, define the gain of $\pi^R$ as a scalar $\psi^R$, and its bias is denoted as a vector $\boldsymbol{\nu}^R$ for long-term. Then, the simplified form of Bellman equation is written as follows:
\begin{align} \label{eq31}
	\psi^R \boldsymbol{1} + \boldsymbol{\nu}^R = \mathcal{T}^R \boldsymbol{\nu}^R,
\end{align}
where $\mathcal{T}^R$ is the Bellman operator and $\boldsymbol{1}$ denotes a vector where all elements are $1$. For the proposed structure-informed policy $\tilde \pi^*$, the unichain cannot be guaranteed, such that the gain is defined as a vector $\boldsymbol{\psi}^\dag$ and the corresponding bias is $\boldsymbol{\nu}^\dag$, the Bellman equation is given by $\boldsymbol{\psi}^\dag + \boldsymbol{\nu}^\dag = \mathcal{T}^\dag \boldsymbol{\nu}^\dag$, where $\mathcal{T}^\dag$ is the Bellman operator of $\tilde \pi^*$ with one-step cost vector $\boldsymbol{c}^\dag$. Due to the invariance of $\boldsymbol{\psi}^\dag$ under state transitions in long-term, we have $\boldsymbol{\psi}^\dag = P^\dag \boldsymbol{\psi}^\dag$, where $P^\dag$ is the transition probability matrix. 

In order to show the performance difference after applying policy $\tilde \pi^*$, we denote it by $\mu = \psi^R \boldsymbol{1} + \boldsymbol{\nu}^R - \mathcal{T}^\dag \boldsymbol{\nu}^R$. Then, substituting \eqref{eq31} and $\mathcal{T}^\dag$ into $\mu$ we have 
\begin{align} \label{eq33}
	\mu & = \psi^R \boldsymbol{1} + \boldsymbol{\nu}^R - (\boldsymbol{c}^\dag + P^\dag \boldsymbol{\nu}^R) \nonumber \\
	& = \psi^R \boldsymbol{1} - \boldsymbol{\psi}^\dag + \boldsymbol{\nu}^R - \boldsymbol{\nu}^\dag + \boldsymbol{\psi}^\dag + \boldsymbol{\nu}^R - (\boldsymbol{c}^\dag + P^\dag \boldsymbol{\nu}^R) \nonumber \\
	& = \psi^R \boldsymbol{1} - \boldsymbol{\psi}^\dag + \Upsilon - P^\dag \Upsilon.
\end{align}
By multiplying both sides of \eqref{eq33} by $P^\dag$ and iteratively summing $L$ steps, we can obtain 
{\small
	\begin{align}
		&   \ \ \ \lim \limits_{L \to \infty} \frac{1}{L} \sum \limits_{l=0}^{L-1} (P^\dag)^l \mu \nonumber \\
		& = \lim \limits_{L \to \infty} \frac{1}{L} \sum \limits_{l=0}^{L-1} (P^\dag)^l (\psi^R \boldsymbol{1} - \boldsymbol{\psi}^\dag + \Upsilon - P^\dag \Upsilon) \nonumber \\
		& = \lim \limits_{L \to \infty} \frac{1}{L} \sum \limits_{l=0}^{L-1} (P^\dag)^l (\psi^R \boldsymbol{1} - \boldsymbol{\psi}^\dag) \nonumber \\
		& \ \ \ + \lim \limits_{L \to \infty} \frac{1}{L} \sum \limits_{l=0}^{L-1} (P^\dag)^l (I - P^\dag) \Upsilon \nonumber \\
		& = \lim \limits_{L \to \infty} \frac{1}{L} [L (\psi^R \boldsymbol{1} - \boldsymbol{\psi}^\dag) + (I - (P^\dag)^L) \Upsilon] \nonumber \\
		& = \psi^R \boldsymbol{1} - \boldsymbol{\psi}^\dag.
\end{align}}

\noindent Then, it is easy to get $\lim \nolimits_{L \to \infty} \frac{1}{L} \sum \nolimits_{l=0}^{L-1} (P^\dag)^l \mu = \bar P^\dag \mu = \psi^R \boldsymbol{1} - \boldsymbol{\psi}^\dag$, where $\bar P^\dag$ is the convergence of $P^\dag$. Recall the MDP definition \cite[Proposition 5.4.2]{bertsekas_2007}, there must exists $\Pr\{s'\left| {s,d^a}\right.\}  \ne \Pr\{s'\left| {s,d^b}\right.\}$ under the assumption of $d^a \ne d^b$, resulting in $\mu > 0$. Hence, $\psi^R \boldsymbol{1} > \boldsymbol{\psi}^\dag$, thus the proposed scheduling policy performs better than the randomized policy in theory. The proof is accomplished.

\section{Proof of Theorem 3} \label{Proof Theorem 3}
To verify the threshold structure, we should first demonstrate the monotonicity of the decomposed value function $Q_i^R(s_i)$ based on Lemma \ref{lemma2}. Similiar to \eqref{35}, under randomized policy $\pi^R$, the single action-value function of the $n$th step iteration can be described as $\Theta_n^\dag(s_i) = {c_i (s_i,d_i)} + {\sum\nolimits_{s_i' \in \mathcal{S}_i} {\mathbb{E}^{\pi_R}[\Pr (s_i'\left| {s_i,d_i} \right.)}] Q_{i,n}^R(s_i')}$. $Q_{i,n}^R(s_i)$, like \eqref{36}, is iterated by $\Theta_n^\dag(s_i) - \Theta_n^\dag(s_i^r)$ with the definite state $s_i^r$. For the supposed states $s_i^a$ and $s_i^b$ with $\Delta_i^{L,a} \leq \Delta_i^{L,b}$, $\Delta_i^{R,a} \leq \Delta_i^{R,b}$ under a certain channel state, one has 
\begin{equation}
	Q_{i,n}^R(s_i^a) \leq Q_{i,n}^R(s_i^b),
\end{equation}
which can further deduce the monotonicity of $Q_i^R(s_i)$ according to the convergence of $Q_{i,n}^R(s_i)$.

Next, the submodularity of $\Theta^\dag (s,d)$ will be proved to show the threshold property, such that for $s^a, s^b$ with $\tilde \Delta^{L,a} = \tilde \Delta^{L,b}$ there exists $d_i =1$ when $\Delta_i^{R,a} \geq \Delta_i^{R,b}$. Besides, $\Theta^\dag (s^a,d) - \Theta^\dag (s^a,d') \leq \Theta^\dag (s^b,d) - \Theta^\dag (s^b,d')$ holds for $\forall i \in N$. By the definition of $\Theta^\dag (s,d)$, for any $s^a, s^b$ we have
\begin{align} \label{46}
	& \Theta^\dag (s^a,d) - \Theta^\dag (s^a,d') - \Theta^\dag (s^b,d) + \Theta^\dag (s^b,d') \nonumber \\
	\mathop  = \limits^{(a)} &  \sum \limits_{s^a \in \mathcal{S}} \Pr \{s_d^a|s^a,d\} Q^R(s_d^a) - \sum \limits_{s^a \in \mathcal{S}} \Pr \{s_{d'}^a|s^a,d'\} Q^R(s_{d'}^a) \nonumber \\
	& - \sum \limits_{s^b \in \mathcal{S}} \Pr \{s_d^b|s^b,d\} Q^R(s_d^b) + \sum \limits_{s^b \in \mathcal{S}} \Pr \{s_{d'}^b|s^b,d'\} Q^R(s_{d'}^b) \nonumber \\
	\mathop  = \limits^{(b)} & \sum \limits_{s^b \in \mathcal{S}} \Pr \{s_{d'}^b|s^b,d'\} Q^R(s_{d'}^b) \nonumber \\
	& - \sum \limits_{s^a \in \mathcal{S}} \Pr \{s_{d'}^a|s^a,d'\} Q^R(s_{d'}^a),
\end{align}
where the only difference of $s^a$ and $s^b$ is the AoFI that corresponding sensor is scheduled, i.e. $\Delta_i^{R,a} \geq \Delta_i^{R,b}$ for $d_i = 1$. Hence, the same states will be reached from $s^a$ and $s^b$ under the action $d$, by which $(a)$ is deduced to $(b)$ in \eqref{46}. For two terms in $(b)$, considering the situation of sensor $i$, $i \in \mathcal{N}^+ \triangleq \{ i|{d_i} = 1,i \in N\}$, because $s_{d'}^a = s_{d'}^b$ for other sensors. Then, for the case of $d_i' = 0$. there exists $s_{i,d'}^a = (\Delta_i^{L,a}, \Delta_i^{R,a} +1, \vartheta)$, $s_{i,d'}^b = (\Delta_i^{L,b}, \Delta_i^{R,b} +1, \vartheta)$. It is obvious that $\tilde \Delta^{L,a} \geq \tilde \Delta^{L,b}$, $\tilde \Delta^{R,a} \geq \tilde \Delta^{R,b}$ under this case, such that the RHS of \eqref{46} is negative and the following inequality holds:
\begin{equation}
	\Theta^\dag (s^a,d) - \Theta^\dag (s^a,d') \leq \Theta^\dag (s^b,d) - \Theta^\dag (s^b,d'). \label{47}
\end{equation}

From \eqref{44}, it is known that $\Theta^\dag(s, d) \leq \Theta^\dag(s, d')$ for $d \ne d'$, so $\tilde \pi^*(s) = d$. Assume that for state $s'$ there exists ${\Delta_i^R}^{\prime} \geq \Delta_i^R$ with $\tilde \Delta^L = {{}\tilde \Delta^L}^{\prime}$, and $\Theta^\dag(s', d) \leq \Theta^\dag(s', d')$ holds for $d \in \mathcal{A}$, $d \ne d'$ if Theorem \ref{the3} is right. According to the result of \eqref{47}, we can obtain $\Theta^\dag(s', d) - \Theta^\dag(s', d') \leq \Theta^\dag(s, d) - \Theta^\dag(s, d')$, thus $\tilde \pi^*(s) = \tilde \pi^*(s') = d$. The proof is completed accordingly. 

\section{Proof of Theorem 4} \label{Proof Theorem 4}
Define $\xi_{s_i}$ as the limiting distribution of state $s_i$, which is constrained by the proposed scheduling policy. According to the structured property of policy $\tilde \pi^*$, $d_i=1$ when AoRI $\Delta_{i}^R \geq {\Delta_{i,\vartheta}^R}^*$ under a certain channel state $\vartheta$. Define the initial state as $s_{i,0} = (0, 1, 0)$. As the Markov chain $s_i$ is irreducible, aperiodic and recurrent for policy $\tilde \pi^*$, such that any state has a positive probability to reach all states and the expectation time of the first time revisiting state $s_{i,0}$ is bounded, leading to the existence of unique limiting distribution of state $s_i$. Denote by ${{}\bar \Delta_i^R}^*$ the maximum AoRI under $\vartheta = 0$ and $\vartheta = 1$ as $\max \{ {\Delta_{i,\vartheta}^R}^* | {\vartheta \in \{0,1\}} \}$, and then we consider the following cases for limiting distribution $\xi_{s_i}$:

\begin{enumerate}
	\item For the case of $0 < \Delta_i^R \leq {{}\bar \Delta_i^R}^*$, if ${\Delta_i^R}' = \Delta_i^R + 1$, then we have
	\begin{equation}
		\xi_{s_i'} = \sum \limits_{\vartheta \in \{0,1\}} \xi_{s_i} \Pr \{ {s_i'} | {{s_i},{\tilde \pi ^ * }} \}. \label{eq36}
	\end{equation}
	When $\Delta_i^R \leq {\Delta_{i,\vartheta}^R}^*$, then the value of $\Pr \{{s_i'}\left| {{s_i},{\tilde \pi ^ *}} \right.\} $ is $\lambda_i \kappa_{\vartheta \vartheta'}$ for ${\Delta_i^L}^{\prime} = 0$ or $(1 - \lambda_i) \kappa_{\vartheta \vartheta'}$ for ${\Delta_i^L}^{\prime} = \Delta_i^L +1$. When $\Delta_i^R > {\Delta_{i,\vartheta}^R}^*$, the corresponding situation is no data arrival or transmission failure, then the value of $\Pr ({s_i'}| {s_i},{\tilde \pi ^ *})$ is given as $\kappa_{\vartheta \vartheta'} (1 - \lambda_i p_i^{\vartheta'}(k))$. \label{item1}
	\item For the case of $\Delta_i^R > {{}\bar \Delta_i^R}^*$, the value of $\Pr \{{s_i'}|{s_i},{\tilde \pi^*}\}$ can refer to the second situation in case \ref{item1}, such that $\Pr ({s_i'}| {s_i},{\tilde \pi^*}) = \kappa_{\vartheta \vartheta'} (1 - \lambda_i p_i^{\vartheta'}(k))$. \label{item2}
\end{enumerate}

It is obvious that $\mathcal{C}(s_0,\pi)$ is bounded under case \ref{item1}) due to $\Delta_i^R \leq {{}\bar \Delta_i^R}^*$, but case \ref{item2}) with $\Delta_i^R > {{}\bar \Delta_i^R}^*$ cannot guarantee the cost boundness. We expand $\xi_{s_i'}$ of case \ref{item2}) based on two different channel states and obtain the following form:
\begin{align} \label{eq35}
	\xi_{s_i'(\vartheta' = 0)} & = \xi_{s_i(\vartheta = 0)}\kappa _{00}(1 - \lambda_i{p_i^0}(k)) \nonumber \\
	& \ \ \ + \xi_{s_i(\vartheta = 1)}\kappa _{10}(1 - \lambda_i{p_i^0}(k)), \nonumber \\
	\xi_{s_i'(\vartheta' = 1)} & = \xi_{s_i(\vartheta = 0)}\kappa _{01}(1 - \lambda_i{p_i^1}(k)) \nonumber \\
	& \ \ \ + \xi_{s_i(\vartheta = 1)}\kappa _{11}(1 - \lambda_i{p_i^1}(k)).
\end{align}

In order to simplify the presentation, we define $\boldsymbol{\xi}_{s_i} = (\xi_{s_i(\vartheta = 0)},\xi_{s_i(\vartheta = 1)})$. According to the form of \eqref{eq35}, it can be further expressed as $\boldsymbol{\xi}_{s_i'} = \boldsymbol{\xi}_{s_i} \Omega (I - \lambda_i \mathcal{P}_i)$. After continuous iterations, we get when AoRI evolves to $({{}\bar \Delta_i^R}^* + \tau )$, $\tau \in \mathbb{N}^+$,
\begin{align}\label{eq39}
	\boldsymbol{\xi}_{s_i({{}\bar \Delta_i^R}^* + \tau)} = \boldsymbol{\xi}_{s_i({{}\bar \Delta_i^R}^*)} (\Omega (I - \lambda_i \mathcal{P}_i))^\tau. 
\end{align}

According to the AoI function proposed in \eqref{3.8}, the convergence of time-average monitoring cost $\mathcal{C}(s_0,\pi)$ is directly related to $h^{\Delta_i^R}(\bar P_i)$. Hence, we have the following equation based on the limiting distribution concept: 
\begin{align} \label{eq37}
	\mathcal{E} = \sum \limits_{i=1}^N \sum \limits_{\Delta_i^R=1}^\infty  \tilde \xi_{s_i(\Delta_i^R)} h^{\Delta_i^R}(\bar P_i),
\end{align}
where $h(\bar P_i) \triangleq {A_i} \bar P_i A_i^T + \Sigma _i^\omega $, $\tilde \xi_{s_i(\Delta_i^R)} \triangleq \sum \nolimits_{\vartheta \in \{0,1\}} \xi_{s_i(\Delta_i^R, \vartheta)} = || {\boldsymbol{\xi}_{s_i(\Delta_i^R)}}||_1$ and $\sum \nolimits_{\Delta_i^R=1}^\infty \tilde \xi_{s_i(\Delta_i^R)} = 1$. By the definition of $h(\bar P_i)$, $h^{\Delta_i^R}(\bar P_i)$ is monotone non-decreasing such that $E < \infty$ can ensure the boundedness of time-average cost, so as to the stability of monitoring performance. Using the matrix theory, the eigenvalue of nonnegative matrix $\Omega (I - \lambda_i \mathcal{P}_i)$ as $\rho(\Omega (I - \lambda_i \mathcal{P}_i))$, thus there exists a eigenvector $\boldsymbol{\varepsilon}_i$ satisfying $\rho(\Omega (I - \lambda_i \mathcal{P}_i))\boldsymbol{\varepsilon}_i = (\Omega (I - \lambda_i \mathcal{P}_i))\boldsymbol{\varepsilon}_i$. Then, it is easy to get $\boldsymbol{\xi}_{s_i({{}\bar \Delta_i^R}^* + \tau)} \boldsymbol{\varepsilon}_i = \boldsymbol{\xi}_{s_i({{}\bar \Delta_i^R}^*)} \rho^\tau (\Omega (I - \lambda_i \mathcal{P}_i)) \boldsymbol{\varepsilon}_i$. There must exist a constant $e_\tau$ that $||\boldsymbol{\xi}_{s_i({{}\bar \Delta_i^R}^* + \tau)}||_1 = e_\tau \rho^\tau (\Omega (I - \lambda_i \mathcal{P}_i)) ||\boldsymbol{\xi}_{s_i({{}\bar \Delta_i^R}^*)}||_1$. Set $\zeta_i = \sum \nolimits_{\Delta_i^R = {{}\bar \Delta_i^R}^*}^\infty \rho^{\Delta_i^R - {{}\bar \Delta_i^R}^*}(\Omega (I - \lambda_i \mathcal{P}_i)) ||\boldsymbol{\xi}_{s_i({{}\bar \Delta_i^R}^*)}||_1 h^{\Delta_i^R}(\bar P_i)$, we obtain the upper boundness of \eqref{eq37} as $\mathcal{E} \leq \sum \nolimits_{i=1}^N \bar e_\tau \zeta_i$
\begin{equation} 
	\mathcal{E} \leq \sum \limits_{i=1}^N \bar e_\tau \zeta_i, \label{eq41}
\end{equation}
where $\bar e_\tau = \max_{\tau \in \mathbb{N}}\{e_\tau\}$ can guarantee the validity of \eqref{eq41}. With the whole form of $h(\bar P_i)$, $\zeta_i$ is expended as follows:
	\begin{align} \label{eq43}
		\zeta_i & = \sum \limits_{j = {{}\bar \Delta_i^R}^*}^\infty \rho^{j - {{}\bar \Delta_i^R}^*}(\Omega (I - \lambda_i \mathcal{P}_i)) ||\boldsymbol{\xi}_{s_i({{}\bar \Delta_i^R}^*)}||_1 h^{j}(\bar P_i) \nonumber \\
		& = \sum \limits_{j = {{}\bar \Delta_i^R}^*}^\infty \rho^{j - {{}\bar \Delta_i^R}^*}(\Omega (I - \lambda_i \mathcal{P}_i)) ||\boldsymbol{\xi}_{s_i({{}\bar \Delta_i^R}^*)}||_1 \nonumber \\
		& \ \ \ \times \left( {A_i^j \bar P_i (A_i^T)^j} + \sum \limits_{l=0}^{j-1} A_i^l \Sigma _i^\omega  (A_i^T)^l \right) \nonumber \\   
		& = ||\boldsymbol{\xi}_{s_i({{}\bar \Delta_i^R}^*)}||_1 A_i^{{{}\bar \Delta_i^R}^*} \left( {\sum \limits_{l=0}^\infty \varsigma_{1,i} (l) } \right) (A_i^T)^{{{}\bar \Delta_i^R}^*} \nonumber \\
		& \ \ \ + ||\boldsymbol{\xi}_{s_i({{}\bar \Delta_i^R}^*)}||_1 \frac{1}{\varsigma_{3,i}} \sum \limits_{l={\Delta_i^R}^* - 1}^\infty \varsigma_{2,i} (l),
\end{align}

\noindent where $\varsigma_{1,i} (l) = \rho^l(\Omega (I - \lambda_i \mathcal{P}_i)) A_i^l \bar P_i (A_i^T)^l$, $\varsigma_{2,i} (l) = \rho^l(\Omega (I - \lambda_i \mathcal{P}_i)) A_i^l \Sigma _i^\omega(A_i^T)^l$ and $\varsigma_{3,i} = \rho^{{{}\bar \Delta_i^F}^* - 1}(\Omega (I - \lambda_i \mathcal{P}_i))$. By expanding \cite[Lemma 3]{ren_dynamic_2014}, $\zeta_i$ is surely bounded if and only if $\rho (\Omega (I - \lambda_i \mathcal{P}_i)) < \frac{1}{\rho^2(A_i)}$ holds. Therefore, $\mathcal{E} < \infty$ is ensured and the monitoring cost is convergent.

\section{Proof of Corollary 2} \label{Proof Corollary 2}
Considering that the data arrival process follows a Markov chain, the occurrence of current data arrival depends on the arrival status in the previous time slot. Therefore, we can obtain the state transition of $\Delta_i^L$ as follows:
	\begin{align} \label{eq44}
		& \Pr \{\gamma_i^R (k) | \gamma_i^R (k-1)\} \nonumber \\
		& = \begin{cases}
			\tilde \lambda_i,     & \text{if} \ \gamma_i^R (k) = 0, \gamma_i^R (k-1) = 0, \\
			1 - \tilde \lambda_i, & \text{if} \ \gamma_i^R (k) = 1, \gamma_i^R (k-1) = 0, \\
			\bar \lambda_i,       & \text{if} \ \gamma_i^R (k) = 1, \gamma_i^R (k-1) = 1, \\
			1 - \bar \lambda_i,   & \text{if} \ \gamma_i^R (k) = 0, \gamma_i^R (k-1) = 1, \\
			0,                    & \text{otherwise}.
		\end{cases}
\end{align}

\noindent Then, the state transition of $s_i^\Delta (k)$ defined in \eqref{16} is transformed as the following form:

1) For $d_i(k) = 0$, there exists
	\begin{align}
		& \Pr \{ {s_i^\Delta(k) | {s_i^\Delta(k-1), \vartheta(k-1), d_i(k) = 0} } \} \nonumber \\
		& =\begin{cases}
			\tilde \lambda_i,                             & {\text{if}} \ \Delta _i^L (k-1) > 0, \\
			& s_i^\Delta(k) = ( \Delta _i^L (k-1) + 1, {\Delta _i^R}(k-1) + 1 ), \\
			\bar \lambda_i,                               & {\text{if}} \ \Delta _i^L (k - 1) = 0, \\
			& s_i^\Delta(k) = (0, {\Delta _i^R}(k-1) + 1), \\
			1 - \tilde \lambda_i,                         & {\text{if}} \ \Delta _i^L (k-1) > 0, \\
			& s_i^\Delta(k) = ( 0, {\Delta _i^R}(k-1) + 1 ), \\
			1 - \bar \lambda_i,                           & {\text{if}} \ \Delta _i^L (k-1) = 0, \\
			& s_i^\Delta(k) = (1, {\Delta _i^R}(k-1) + 1),  \\
			0,                                            & {\text{otherwise}}.
		\end{cases}
\end{align}

2) For $d_i(k) = 1$, there exists
	\begin{align}
		& \Pr \{ {s_i^\Delta(k) | {s_i^\Delta(k-1), \vartheta(k-1), d_i(k) = 1}} \} \nonumber \\
		& =\begin{cases}
			\tilde \lambda_i p_i^\vartheta (k),           & {\text{if}} \ \Delta _i^L (k-1) > 0,  \\
			& s_i^\Delta(k) =(\Delta _i^L (k-1) + 1,\\
			& {\Delta _i^L}(k-1) + 1), \\
			\bar \lambda_i p_i^\vartheta (k),             & {\text{if}} \ \Delta _i^L (k-1) = 0,  \\
			& s_i^\Delta(k) = (0,1), \\ 
			(1 - \tilde \lambda_i) p_i^\vartheta (k),     & {\text{if}} \ \Delta _i^L (k-1) > 0, \\
			& s_i^\Delta(k) = (0, {\Delta _i^L}(k-1) + 1), \\
			(1 - \bar \lambda_i) p_i^\vartheta (k),       & {\text{if}} \ \Delta _i^L (k-1) = 0, \\
			& s_i^\Delta(k) = (1,1), \\
			\tilde \lambda_i (1 - p_i^\vartheta(k)),      & {\text{if}} \ \Delta _i^L (k-1) > 0,  \\
			& s_i^\Delta(k) = (\Delta _i^L (k-1) + 1,  \\
			& {\Delta _i^R}(k-1) + 1), \\
			\bar \lambda_i (1 - p_i^\vartheta(k)),        & {\text{if}} \ \Delta _i^L (k-1) = 0,  \\
			& s_i^\Delta(k) = (0,1), \\
			(1 - \tilde \lambda_i) (1 - p_i^\vartheta(k)),& {\text{if}} \ \Delta _i^L (k-1) > 0, \\
			& s_i^\Delta(k) = (0, {\Delta _i^R}(k-1) + 1), \\
			(1 - \bar \lambda_i) (1 - p_i^\vartheta(k)),  & {\text{if}} \ \Delta _i^L (k-1) = 0, \\
			& s_i^\Delta(k) = (1, {\Delta _i^R}(k-1) + 1), \\
			0,                                            & {\text{otherwise}}.
		\end{cases}
\end{align}

\noindent Referring to the proof of Theorem \ref{thm4}, the boundness of $\mathcal{C}(s_0, \pi)$ cannot be surely bounded when $\Delta_i^R > {{}\bar \Delta_i^R}^*$. Then, similar to \eqref{eq35}, we can obtain the distribution of $\xi_{s_i'}$ under different arrival situations. According to \eqref{eq44}, we have following cases:
	\begin{align} \label{eq46}
		& \xi_{s_i'(\vartheta' = 0)} \nonumber \\ & = 
		\begin{cases}
			\xi_{s_i(\vartheta = 0)}\kappa _{00}(1 - (1 - \tilde \lambda_i){p_i^0}(k))       \\
			\ + \xi_{s_i(\vartheta = 1)}\kappa _{10}(1 - (1 - \tilde \lambda_i){p_i^0}(k)), &  \text{if} \ \gamma_i^R (k-1) = 0, \\
			\xi_{s_i(\vartheta = 0)}\kappa _{00}(1 - \bar \lambda_i{p_i^0}(k))               \\
			\ + \xi_{s_i(\vartheta = 1)}\kappa _{10}(1 - \bar \lambda_i{p_i^0}(k)),   &  \text{if} \ \gamma_i^R (k-1) = 1. \\
		\end{cases} 
\end{align}

\begin{align} \label{eq47}
		& \xi_{s_i'(\vartheta' = 1)} \nonumber \\ & = 
		\begin{cases}
			\xi_{s_i(\vartheta = 0)}\kappa _{01}(1 - (1 - \tilde \lambda_i){p_i^1}(k))       \\
			\ + \xi_{s_i(\vartheta = 1)}\kappa _{11}(1 - (1 - \tilde \lambda_i){p_i^1}(k)), &  \text{if} \ \gamma_i^R (k-1) = 0,  \\
			\xi_{s_i(\vartheta = 0)}\kappa _{01}(1 - (1 - \bar \lambda_i){p_i^1}(k))         \\
			\ + \xi_{s_i(\vartheta = 1)}\kappa _{11}(1 - (1 - \bar \lambda_i){p_i^1}(k)),   &  \text{if} \ \gamma_i^R (k-1) = 1.  \\
		\end{cases}
\end{align}

\noindent Due to the fact that $\gamma_i^R (k-1)$ has two possible states, $\xi_{s_i'(\vartheta' = 0)}$ and $\xi_{s_i'(\vartheta' = 1)}$ can be discussed separately. Based on the forms of \eqref{eq46} and \eqref{eq47}, the original expression in \eqref{eq39} is thus be derived in two corresponding cases as follows:
\begin{align}
		& \boldsymbol{\xi}_{s_i({{}\bar \Delta_i^R}^* + \tau)} \nonumber \\ & =  
		\begin{cases}
			\boldsymbol{\xi}_{s_i({{}\bar \Delta_i^R}^*)} (\Omega (I - (1 - \tilde \lambda_i) \mathcal{P}_i))^\tau,  &  \text{if} \ \gamma_i^R (k-1) = 0,\\
			\boldsymbol{\xi}_{s_i({{}\bar \Delta_i^R}^*)} (\Omega (I -  \tilde \lambda_i \mathcal{P}_i))^\tau,       &  \text{if} \ \gamma_i^R (k-1) = 1.\\
		\end{cases}
\end{align}

\noindent And the remaining proof steps follow the same procedure as in \eqref{eq37}-\eqref{eq43}. To ensure the convergence of the monitoring cost, both $\rho (\Omega (I - (1 - \tilde \lambda_i) \mathcal{P}_i)) < \frac{1}{\rho^2 (A_i)}$ and $\rho (\Omega (I -  \bar \lambda_i \mathcal{P}_i)) < \frac{1}{\rho^2 (A_i)}$ should be satisfied, such that we can get the stability condition for the Markov arrival process as $\rho (\Omega (I -  \hat \lambda_i \mathcal{P}_i)) < \frac{1}{\rho^2 (A_i)}$, where $\hat \lambda_i = \min \{1 - \tilde \lambda_i, \bar \lambda_i\}$. The proof is completed.

\ifCLASSOPTIONcaptionsoff
  \newpage
\fi

\bibliographystyle{IEEEtran}
\bibliography{reference}

\end{document}